\newcommand{\indepe}{\mathop{\perp\!\!\!\perp}}
\newtheorem{Theorem}{Theorem}
\newtheorem{Lemma}{Lemma}
\newtheorem{Proposition}{Proposition}
\newtheorem{Assumption}{Assumption}
\newtheorem{Example}{Example}
\newtheorem{Simulation}{Simulation}
\newtheorem{Remark}{Remark}
\begin{document}

\author{Takuya Ishihara\footnote{Waseda University, Faculty of Social Sciences, email: takuya319ti@gmail.com. I would like to express my appreciation to the editor and the anonymous referees for their careful reading and comments on the paper. I also would like to thank Katsumi Shimotsu, Hidehiko Ichimura, and the seminar participants at the University of Tokyo, Otaru University of Commerce, Kanazawa University, Hiroshima University, and Shanghai Jiao Tong University. This work was supported by the Grant-in-Aid for JSPS Research Fellowship (17J03043) from the JSPS.}}

\title{Partial Identification of Nonseparable Models using Binary Instruments}

\date{\today}
\maketitle

\begin{abstract}
In this study, we explore the partial identification of nonseparable models with continuous endogenous and binary instrumental variables.
We show that the structural function is partially identified when it is monotone or concave in the explanatory variable.
\cite{d2015identification} and \cite{torgovitsky2015identification} prove the point identification of the structural function under a key assumption that the conditional distribution functions of the endogenous variable for different values of the instrumental variables have intersections.
We demonstrate that, even if this assumption does not hold, monotonicity and concavity provide identifying power.
Point identification is achieved when the structural function is flat or linear with respect to the explanatory variable over a given interval.
We compute the bounds using real data and show that our bounds are informative.
\if0
Keywords: nonseparable models, partial identification, endogeneity, shape restrictions, unobserved heterogeneity, instrumental variables.
\fi
\end{abstract}

\section{Introduction}

In this study, we examine the identification of a system of structural equations that takes the following form:
\begin{equation}
\label{model}
\begin{aligned}
Y &=& g(X,\epsilon) \\
X &=& h(Z,\eta),
\end{aligned}
\end{equation}
where $Y\in\mathbb{R}$ is a scalar response variable, $X\in\mathbb{R}$ is a continuous endogenous variable, $Z \in \{0,1\}$ is a binary instrument, and $\epsilon$ and $\eta$ are unobservable scalar variables.
This specification is nonseparable in the unobservable variable $\epsilon$ and captures the unobserved heterogeneity in the effect of $X$ on $Y$.
Such models have also been considered by, for example, \cite{d2015identification} and \cite{torgovitsky2015identification}.
For any random variable $U$ and random vector $W$, let $F_{U|W}(u|w)$ denote the conditional distribution function of $U$ conditional on $W$. 
In some places, we interchangeably use the notation $F_{U|W=w}(u)$ instead of $F_{U|W}(u|w)$.
Let $\mathcal{X}$, $\mathcal{X}_z$, and $\mathcal{Y}_{x,z}$ denote the interiors of the support of $X$, $X|Z=z$, and $Y|X=x,Z=z$, respectively.

\cite{d2015identification} and \cite{torgovitsky2015identification} show that $g$ is point identified when $g(x,e)$ and $h(z,v)$ are strictly increasing in $e$ and $v$ and $Z$ is independent of $(\epsilon,\eta)$.
Their results are important for empirical analyses in which many instruments are binary or discrete, such as the intent-to-treat in a randomized controlled experiment or quarter of birth used by \cite{angrist1991does}.
For nonparametric models with a continuously distributed $X$, several point identification results require $Z$ to be continuously distributed.
See, for example, \cite{newey1999nonparametric} and \cite{imbens2009identification}.

\cite{d2015identification} and \cite{torgovitsky2015identification} assume that $F_{X|Z}(x|0)$ and $F_{X|Z}(x|1)$ have intersections, when establishing point identification for $g$.
However, many empirically important models do not satisfy this assumption.
For example, $F_{X|Z}(x|0)$ and $F_{X|Z}(x|1)$ do not have an intersection when $Z$ has a strictly monotonic effect on $X$ such as linear models $X=\beta_0 + \beta_1 Z + \eta$.
Further, in many applications, instrumental variables have a strictly monotonic effect on endogenous variables (e.g. the LATE framework proposed by \cite{imbens1994identification}).
For example, as in \cite{macours2012cash}, cash transfer programs have been implemented in several countries.
As such, if we use treatment indicator $Z$ as the instrumental variable for income $X$, $Z$ has a strictly monotonic effect on $X$, which violates the intersection assumption.
Hence, $F_{X|Z}(x|0)$ and $F_{X|Z}(x|1)$ never have an intersection in this example.
Actually, in Section 5, we show that $F_{X|Z}(x|0)$ and $F_{X|Z}(x|1)$ do not have an intersection in the real data.

This study shows that, when $g(x,e)$ is monotone or concave in $x$, we can partially identify $g$, even if $F_{X|Z}(x|0)$ and $F_{X|Z}(x|1)$ have no intersection.
The structural function $g(x,e)$ is monotone or concave in $x$ in many economic models.
For example, the demand function is decreasing in price if the income effect is negligible, and economic analyses of production often suppose that the production function is monotone and concave in inputs.
In general, the demand function is not decreasing in price.
For instance, \cite{hoderlein2011many} employs nonseparable models and analyzes consumer behavior without the monotonicity assumption.
Many studies employ monotonicity or concavity to identify the target parameters (e.g., \cite{manski1997monotone}, \cite{giustinelli2011non}, \cite{d2013nonlinear}, and \cite{okumura2014concave}).
Specifically, \cite{manski1997monotone} imposes these assumptions and shows that the average treatment response is partially identified.
The partial identification approach using the concavity assumption in this study is somewhat similar to that considered by \cite{d2013nonlinear}.

In this model, monotonicity and concavity provide identifying power.
\cite{d2015identification} and \cite{torgovitsky2015identification} show that when $F_{X|Z}(x|0)$ and $F_{X|Z}(x|1)$ have intersections, $T_{x',x}(y) \equiv g(x',g^{-1}(x,y))$ is identified for all $x$, $x'$, and $y$, where $g^{-1}(x,y)$ is the inverse of $g$ with respect to its last component.
Then, $g$ is point identified under appropriate normalization.
By contrast, when $F_{X|Z}(x|0)$ and $F_{X|Z}(x|1)$ do not have intersections, we only identify $T_{x',x}(y)$ for some $x$ and $x'$.
Although this information restricts the functional form of $g$, it does not provide the informative bounds of $g$.
In this case, monotonicity and convexity allow us to interpolate or extrapolate $T_{x',x}(y)$ and provide the informative bounds of $T_{x',x}(y)$.
For example, if $T_{x',x}(y)$ is identified and $\tilde{x} \geq x'$, monotonicity implies $T_{\tilde{x},x}(y) \geq T_{x',x}(y)$, and hence, we obtain a lower bound of $T_{\tilde{x},x}(y)$.
Using these bounds, we can achieve the partial identification of $g$.

There is a rich literature on the identification of nonseparable models using the control function approach.
For example, \cite{chesher2007instrumental}, \cite{hoderlein2007identification}, \cite{florens2008identification}, \cite{imbens2009identification}, \cite{hoderlein2009identification}, \cite{hoderlein2011many}, \cite{kasy2011identification}, and \cite{blundell2013control} consider the identification of nonseparable models using the control function approach.
Particularly, \cite{imbens2009identification} consider models similar to (\ref{model}).
Their study allows $\epsilon$ to be multivariate, showing that the quantile function of $g(x,\epsilon)$ is point identified, while in this analysis, $\epsilon$ is imposed as scalar.
Their results need continuous instruments, whereas those of \cite{d2015identification}, \cite{torgovitsky2015identification}, and the present study do not.

We assume that the instrumental variable $Z$ is binary.
\cite{d2015identification} consider the case in which the instrumental variable takes more than two values, thus showing point identification can be achieved using group and dynamical systems theories even when $F_{X|Z}(x|z)$ and $F_{X|Z}(x|z')$ have no intersection.

\cite{caetano2017identifying} provides alternative results for the identification of nonseparable models with continuous endogenous variables and binary instruments.
To this end, they use the observed covariates to identify the structural function.
Although their approach does not require $F_{X|Z}(x|z)$ and $F_{X|Z}(x|z')$ to intersect, they assume the structural function does not depend on the observed covariates.
By contrast, our identification approach does not require the existence of covariates and allows the structural function to depend on the observed covariates.\footnote{For simplicity, we consider the case where there are no covariates. It is thus straightforward to extent our model to the model with covariates.}

The remainder of this study is organized as follows.
Section 2 introduces the assumptions employed in the analysis.
Section 3 demonstrates our partial identification strategy and shows that we cannot identify $g$ without any shape restrictions.
Sections 4 provides the lower and upper bounds of $g$ under the monotonicity and concavity assumptions.
Section 5 computes the bounds using real data.
Section 6 concludes the paper.

\section{Model}

The following two assumptions are the same as those in \cite{d2015identification} and \cite{torgovitsky2015identification}:

\begin{Assumption}
The instrument is independent of the unobservable variables: $Z\indepe(\epsilon,\eta)$.
\end{Assumption}

\begin{Assumption}
(i) The function $g$ is continuous and $g(x,e)$ is strictly increasing in $e$ for $x \in \mathcal{X}$. (ii) For $z \in \{0,1\}$, $h(z,v)$ is continuous and strictly increasing in $v$. 
\end{Assumption}

Assumptions 1 and 2 (ii) are typically employed when using the control function approach.
See, for example, \cite{imbens2009identification}, \cite{d2015identification}, and \cite{torgovitsky2015identification}.
Although Assumption 2 (i) is strong, it is necessary for our identification approach.
\cite{hoderlein2007identification}, \cite{hoderlein2009identification}, \cite{hoderlein2011many}, and \cite{imbens2009identification} do not employ this assumption.

The next assumption regarding the conditional distributions of $X$ conditional on $Z$ differs from that of \cite{d2015identification} and \cite{torgovitsky2015identification}.

\begin{Assumption}
(i) The conditional distribution $F_{X|Z}(x|z)$ is continuous in $x$ for $z \in \{0,1\}$ and $F_{X|Z}(x|0) < F_{X|Z}(x|1)$ for $x \in \mathcal{X}$. (ii) We have $\mathcal{X}_0 = (\underline{x}_0,\overline{x}_0)$, $\mathcal{X}_1 = (\underline{x}_1,\overline{x}_1)$, and $-\infty < \underline{x}_1<\underline{x}_0 < \overline{x}_1<\overline{x}_0 < \infty$.
\end{Assumption}

Conditions (i) and (ii) above imply that $F_{X|Z}(x|z)$ is strictly increasing and continuous in $x$ conditional on $\mathcal{X}_{z}$.
Further, condition (i) implies that $F_{X|Z}(x|0)$ and $F_{X|Z}(x|1)$ do not have any intersection on the support of $X$ and $X|Z=0$ stochastically dominates $X|Z=1$.
Therefore, $Z$ has a strictly monotonic effect on $X$.
\cite{d2015identification} and \cite{torgovitsky2015identification} rule out this case because they assume $F_{X|Z}(x|0)$ and $F_{X|Z}(x|1)$ have intersections on the support of $X$.
Condition (ii) implies that $\underline{x}_0 \neq \underline{x}_1$ and $\overline{x}_0 \neq \overline{x}_1$, which may be restrictive in some cases.
For example, \cite{torgovitsky2015identification} considers an experiment that randomly assigns students across various schools to a large or small class ($Z = 0$ or $1$, respectively).
Then, he shows that $\underline{x}_0 = \underline{x}_1$ can happen when $X$ is the class-size, $Z$ is the randomly assigned intent-to-treat, and partial compliance arises.

When we have $\mathcal{X}_0=\mathcal{X}_1=(\underline{x},\overline{x})$, then $F_{X|Z}(x|0)$ and $F_{X|Z}(x|1)$ must have intersections at the boundary points of the support of $X$.
However, in this case, $g$ is not identified unless $g(\underline{x},e)$ (or $g(\overline{x},e)$) exists and $g(\underline{x},e)$ (or $g(\overline{x},e)$) is strictly increasing in $e$.
\cite{torgovitsky2015identification} shows that the point identification of $g$ holds when $F_{X|Z}(x|0)$ and $F_{X|Z}(x|1)$ intersect at a boundary point $\underline{x}$, and $g(\underline{x},e)$ exists and is strictly increasing in $e$.

Next, we impose restrictions on the conditional distributions of $Y$ conditional on $X$ and $Z$.

\begin{Assumption}
(i) For $(z,x,y) \in \{0,1\} \times \mathcal{X}_z \times \mathcal{Y}_{x,z}$, $F_{Y|X,Z}(y|x,z)$ is continuous in $x$ and $y$. (ii) For $(z,x) \in \{0,1\} \times \mathcal{X}_z$, we have $\mathcal{Y}_{x,z} = \mathcal{Y} = (\underline{y},\overline{y})$, where $- \infty \leq \underline{y}<\overline{y} \leq \infty$.
\end{Assumption}

\cite{d2015identification} and \cite{torgovitsky2015identification} also assume condition (i) but not condition (ii).
Both conditions imply that $F_{Y|X,Z}(y|x,z)$ is strictly increasing and continuous in $y$ on $\mathcal{Y}$.
Hence, the conditional quantile function of $Y$ conditional on $X$ and $Z$ is the inverse of $F_{Y|X,Z}(y|x,z)$.
Condition (ii) is not necessary for this study's results but, without it, deriving the results can become cumbersome.
In Appendix 3, we derive the bounds of $g$ without this condition.

Finally, we impose the normalization assumption on unobservable variables and support condition of $\epsilon|X=x,Z=z$.

\begin{Assumption}
(i) We have $\epsilon \sim U(0,1)$ and $\eta \sim U(0,1)$. (ii) For $(z,x) \in \{0,1\} \times \mathcal{X}_z$, the interior of the support of $\epsilon|X=x,Z=z$ is $(0,1)$.
\end{Assumption}

Condition (i) is the usual normalization in a nonseparable model (see \cite{matzkin2003nonparametric}).
\cite{torgovitsky2015identification} does not use this normalization, while \cite{d2015identification} normalize $\epsilon$ to be uniformly distributed.
Condition (ii) implies that $g(x,e) \in (\underline{y},\overline{y}) = \mathcal{Y}$ for all $(x,e) \in \mathcal{X} \times (0,1)$.
Condition (ii) is necessary because, if the support of $\epsilon|X=x,Z=z$ is $[0,\bar{e}]$ for some $0<\bar{e}<1$, then the conditional support of $Y$ given $X=x$ and $Z=z$ is equal to $\{g(x,e):e \in [0,\bar{e}]\}$ and we have $g(x,e) \not\in \mathcal{Y}$ for $e>\bar{e}$.
This implies that we can not identify $g(x,e)$ for $e>\bar{e}$.

\begin{Example}[Cash Transfer Programs]
Cash transfer programs have been conducted in many countries and many papers estimate their impacts on early childhood development by using randomized experiments.
For example, \cite{macours2012cash} analyze the impact of a cash transfer program on early childhood cognitive development.
In this program, participants were randomly assigned to either the treatment or control groups.
As such, we can consider the following model:
\begin{eqnarray}
Y &=& g(X,\epsilon), \nonumber \\
X &=& \tilde{Z}h_1(\eta) + (1-\tilde{Z})h_0(\eta), \nonumber
\end{eqnarray}
where $Y$ is the child's outcome of cognitive development, $X$ is the total expenditure, and $\tilde{Z}$ is the treatment indicator of the program.
Because cash transfers usually increase total expenditure, we can assume $h_1(\eta)-h_0(\eta)>0$.
When participants are randomly assigned to either the treatment or control groups, $Z \equiv 1-\tilde{Z}$ is independent of $(\epsilon,\eta)$ and hence Assumption 1 is satisfied.
Because $Z$ is independent of $\eta$, we have $F_{X|Z}(x|1) = P(h_0(\eta) \leq x)$ and $F_{X|Z}(x|0) = P(h_1(\eta) \leq x)$.
Since $h_1(\eta) > h_0(\eta)$, we have $F_{X|Z}(x|0) < F_{X|Z}(x|1)$ for all $x$.
In this case, Assumption 3 is satisfied, that is, $F_{X|Z}(x|0)$ and $F_{X|Z}(x|1)$ have no intersection.
In Section 5, we show this assumption actually holds for the data used by \cite{macours2012cash}.
\end{Example}

\section{Basic idea of identification}

In this section, we explain the basic idea of our identification approach.
Let $\overline{\mathcal{Y}}$ be the closure of $\mathcal{Y}$.
We establish the partial identification of $g$ by showing we can identify functions $T^U_{x',x}(y):\overline{\mathcal{Y}}\rightarrow \overline{\mathcal{Y}}$ and $T^L_{x',x}(y):\overline{\mathcal{Y}}\rightarrow \overline{\mathcal{Y}}$ and they are (i) strictly increasing in $y$, (ii) surjective, that is, $T^U_{x',x}\left( [\underline{y}, \overline{y}] \right)=T^L_{x',x}\left( [\underline{y}, \overline{y}] \right)=[\underline{y}, \overline{y}]$, and (iii) satisfy the following inequalities:
\begin{eqnarray}
g(x',e) &\leq & T_{x',x}^U \left( g(x,e) \right), \label{TU} \\
g(x',e) &\geq & T_{x',x}^L \left( g(x,e)\right). \label{TL} 
\end{eqnarray}
From (\ref{TU}) and (\ref{TL}), $T_{x',x}^U(y)$ and $T_{x',x}^L(y)$ are the upper and lower bounds of $T_{x',x}(y) \equiv g(x',g^{-1}(x,y))$, respectively.
If $T_{x',x}^U(y)$ is identified for all $x,x' \in \mathcal{X}$, we can obtain the lower bound of the structural function $g(x,e)$ in the following manner.
Here, we define $G_x^L(u) \equiv \int F_{Y|X=x'} \left( T_{x',x}^U (u) \right) dF_{X}(x')$.
If $T_{x',x}^U(y)$ satisfying (\ref{TU}) is obtained for all $x,x' \in \mathcal{X}$, then we have
\begin{eqnarray}
 G_x^L\left( g(x,e) \right) &=&\int F_{Y|X=x'} \left( T_{x',x}^U \left(g(x,e)\right) \right) dF_{X}(x') \nonumber \\ 
 &\geq & \int F_{Y|X=x'}\left(g(x',e)\right) dF_{X}(x') \nonumber \\
 &=& \int P(g(x',\epsilon) \leq g(x',e)|X=x') dF_{X}(x') \nonumber \\
 &=& \int P(\epsilon \leq e |X=x') dF_{X}(x') \ = \ e, \label{approach_DF1}
\end{eqnarray}
where the first inequality follows from (\ref{TU}) and the third equality follows from the strict monotonicity of $g(x,e)$ in $e$.
Because $F_{Y|X=x'}(y)$ and $T_{x',x}^U(y)$ are strictly increasing in $y$, $u<u'$ implies $F_{Y|X=x'} \left( T_{x',x}^U (u) \right) < F_{Y|X=x'} \left( T_{x',x}^U (u') \right)$ for all $x'$.
Hence, $G_x^L(u)$ is strictly increasing in $u$.
Because $T_{x',x}^U(y)$ is surjective, we have $G_x^L\left( [\underline{y},\overline{y}] \right) = [0,1]$.
Hence, for all $e \in (0,1)$, we have
\begin{equation}
g(x,e) \geq \left(G_x^L\right)^{-1}(e). \label{approach_DF2}
\end{equation}
Similarly, we define $G_x^U(u) \equiv \int F_{Y|X=x'} \left( T_{x',x}^L (u) \right) dF_{X}(x')$, and thus, we have $g(x,e) \leq \left(G_x^U\right)^{-1}(e)$.
These bounds are pointwise not uniform.
Throughout this paper, we focus on pointwise bounds of $g$.

Next, we explain how to construct functions $T^U_{x',x}(y)$ and $T^L_{x',x}(y)$ that satisfy (\ref{TU}) and (\ref{TL}).
For any random variable $U$ and random vector $W$, let $Q_{U|W}(\tau|w)$ denote the conditional $\tau$-th quantile of $U$ conditional on $W=w$, that is, $Q_{U|W}(\tau|w) \equiv \inf\{u:F_{U|W}(u|w) \geq \tau \}$.
As in \cite{torgovitsky2015identification}, we define $\pi(x):\mathcal{X}_0\rightarrow\mathcal{X}_1$ and $\pi^{-1}(x):\mathcal{X}_1\rightarrow\mathcal{X}_0$\footnote{These functions correspond to $s_{ij}$ in \cite{d2015identification}.} as:
\begin{equation}\label{pi}
\begin{aligned}
 \pi(x) &\equiv & Q_{X|Z}\left( F_{X|Z}(x|0) | 1 \right), \\
 \pi^{-1}(x) &\equiv & Q_{X|Z}\left( F_{X|Z}(x|1) | 0 \right).
\end{aligned}
\end{equation}
Figure 1 illustrates functions $\pi(x)$ and $\pi^{-1}(x)$.
By definition of $\pi(x)$, if $x \not\in \mathcal{X}_0$, then $\pi(x)$ does not exist.
Similarly, if $x \not\in \mathcal{X}_1$, then $\pi^{-1}(x)$ does not exist.
The following result is essentially proven by \cite{d2015identification} (Theorem 1).
However, we state this result as a proposition because it plays a central role in the following and our assumptions differ somewhat from those of \cite{d2015identification}.

\begin{Proposition}
We define
\begin{eqnarray}
\tilde{T}_{x,1}(y) &\equiv & Q_{Y|X,Z} \left( F_{Y|X,Z} \left( y | x,0 \right)|\pi(x), 1 \right) \ \ \ \ \text{for $x \in \mathcal{X}_0$  and} \nonumber \\
\tilde{T}_{x,-1}(y) &\equiv & Q_{Y|X,Z} \left( F_{Y|X,Z} \left( y|x,1 \right)|\pi^{-1}(x),0 \right) \ \ \ \ \text{for $x \in \mathcal{X}_1$.} \nonumber
\end{eqnarray}
Then, under Assumptions 1--5, we have
\begin{eqnarray}
 g\left( \pi(x),e \right) &=& \tilde{T}_{x,1} \left( g(x,e) \right) \ \ \ \ \text{for $x \in \mathcal{X}_0$  and} \nonumber \\
 g\left( \pi^{-1}(x),e \right) &=& \tilde{T}_{x,-1} \left( g(x,e) \right) \ \ \ \ \text{for $x \in \mathcal{X}_1$.} \nonumber
\end{eqnarray}
\end{Proposition}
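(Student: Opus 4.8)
The plan is to exploit the control-function structure so that conditioning on $(X,Z)=(x,z)$ becomes equivalent to conditioning on a single rank variable, and then to transport that rank across the two values of $Z$ using $\pi$. I focus on the first identity $g(\pi(x),e)=\tilde{T}_x^{(1)}(g(x,e))$; the second follows by the symmetric argument with the roles of $0$ and $1$ interchanged. First I would record the rank representation of $X$. Since $h(z,\cdot)$ is strictly increasing (Assumption 2(ii)), $\eta\sim U(0,1)$ (Assumption 5(i)), and $Z\indepe\eta$ (Assumption 1), the conditional quantile satisfies $Q_{X|Z}(v|z)=h(z,v)$, equivalently $F_{X|Z}(x|z)=h^{-1}(z,x)$, where $h^{-1}(z,\cdot)$ inverts $h(z,\cdot)$. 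Hence conditioning on $(X,Z)=(x,z)$ pins down $\eta=F_{X|Z}(x|z)$, and by the definition of $\pi$ together with the continuity and strict monotonicity of $F_{X|Z}(\cdot|1)$ on $\mathcal{X}_1$ (Assumption 3), the events $\{X=x,Z=0\}$ and $\{X=\pi(x),Z=1\}$ share the common value $\eta=F_{X|Z}(x|0)=F_{X|Z}(\pi(x)|1)=:v$.

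Next I would compute the conditional law of $\epsilon$. Because $(\epsilon,\eta)\indepe Z$ and $\eta$ is a deterministic function of $(X,Z)$, I claim that
\[
F_{\epsilon|X,Z}(e|x,z)=F_{\epsilon|\eta}\bigl(e\mid F_{X|Z}(x|z)\bigr).
\]
In particular the right-hand side depends on $(x,z)$ only through $v=F_{X|Z}(x|z)$, so $F_{\epsilon|X,Z}(\cdot|x,0)=F_{\epsilon|X,Z}(\cdot|\pi(x),1)$, both being equal to $F_{\epsilon|\eta}(\cdot|v)$. This invariance of the $\epsilon$-distribution across the two conditioning events is the heart of the argument.

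Then I would convert between the $Y$- and $\epsilon$-scales using that $g(x,\cdot)$ is strictly increasing (Assumption 2(i)): for each $(x,z)$ this yields $F_{Y|X,Z}(g(x,e)|x,z)=F_{\epsilon|X,Z}(e|x,z)$ and $Q_{Y|X,Z}(\tau|x,z)=g\bigl(x,Q_{\epsilon|X,Z}(\tau|x,z)\bigr)$. Substituting into the definition of $\tilde{T}_x^{(1)}$, the inner evaluation gives $\tau_\epsilon:=F_{\epsilon|X,Z}(e|x,0)=F_{\epsilon|\eta}(e|v)$, and the outer quantile gives $g\bigl(\pi(x),Q_{\epsilon|X,Z}(\tau_\epsilon|\pi(x),1)\bigr)$. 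By the invariance from the previous step, $Q_{\epsilon|X,Z}(\tau_\epsilon|\pi(x),1)=Q_{\epsilon|\eta}(\tau_\epsilon|v)=e$, where the last equality uses that $F_{\epsilon|\eta}(\cdot|v)$ is strictly increasing, so its quantile inverts it exactly; this is precisely where Assumption 5(ii) (full support $(0,1)$ of $\epsilon|X,Z$) and Assumption 4 (continuity and common support of $Y$, making $F_{Y|X,Z}$ and $Q_{Y|X,Z}$ genuine inverses) are used. Assembling the pieces gives $\tilde{T}_x^{(1)}(g(x,e))=g(\pi(x),e)$.

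The main obstacle is the middle step: making rigorous the claim that conditioning on $(X,Z)$ coincides with conditioning on $(\eta,Z)$ and that the dependence on $Z$ can then be dropped. One must argue that, for each fixed $z$, the map $\eta\mapsto h(z,\eta)$ is a measurable bijection (guaranteed by the strict monotonicity and continuity of $h(z,\cdot)$), so that the conditional distribution is well defined and transforms correctly, and then invoke $Z\indepe(\epsilon,\eta)$ to eliminate $Z$. The remaining delicacy is bookkeeping: each quantile/CDF inversion must be justified by the relevant strictness and full-support conditions in Assumptions 2, 4, and 5, and one should verify that $v=F_{X|Z}(x|0)$ indeed lies in the range of $F_{X|Z}(\cdot|1)$ so that $\pi(x)$ and the invariance identity are meaningful.
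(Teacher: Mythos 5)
Your proposal is correct and follows essentially the same route as the paper's proof: you construct the control variable $V=F_{X|Z}(X|Z)=\eta$, show that the conditional law of $\epsilon$ given $(X,Z)$ depends only on this value so that $\{X=x,Z=0\}$ and $\{X=\pi(x),Z=1\}$ induce the same $\epsilon$-distribution, and then transport ranks through the strictly increasing $g(x,\cdot)$ and the CDF/quantile inversion. The only cosmetic difference is that you carry out the final inversion on the $\epsilon$-scale, using Assumption 5(ii), whereas the paper performs the identical cancellation on the $Y$-scale via $Q_{Y|X,Z}\circ F_{Y|X,Z}=\mathrm{id}$; the substance is the same.
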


We provide the sketch of proof.
We define 
\begin{equation}
V \equiv F_{X|Z}(X|Z). \label{def_V}
\end{equation}
This is called ``control variable'' in \cite{imbens2009identification}.
From Assumptions 1 and 5 (i), we obtain $V=\eta$.
Because $\{X=x,Z=z\}$ is equivalent to $\{F_{X|Z}(X|Z)=(x|z), Z=z\}$, we have
$$
F_{\epsilon|X,Z}(e|x,z) = F_{\epsilon|V,Z}(e|F_{X|Z}(x|z),z).
$$
Because the control variable $V$ is equal to $\eta$, it follows from Assumption 1 that
\begin{eqnarray}
P(\epsilon \leq e | X=x, Z=0) = P(\epsilon \leq e | X=\pi(x), Z=1). \label{control function}
\end{eqnarray}

Next, we show that (\ref{control function}) implies $g\left( \pi(x),e \right) = \tilde{T}_{x,1} \left( g(x,e) \right)$.
It follows from (\ref{control function}) and the strict monotonicity of $g$ that 
\begin{eqnarray}
F_{Y|X,Z}(g(x,e)|x,0) &=& P\left( g(x,\epsilon) \leq g(x,e) | X=x,Z=0 \right) \nonumber \\
&=& P\left( \epsilon \leq e | X=x,Z=0 \right) \nonumber \\
&=& P\left( \epsilon \leq e | X=\pi(x),Z=1 \right) \nonumber \\
&=& F_{Y|X,Z}(g(\pi(x),e)|\pi(x),1). \nonumber 
\end{eqnarray}
Hence, we obtain $g\left( \pi(x),e \right) = \tilde{T}_{x,1} \left( g(x,e) \right)$.
Similarly, we also obtain $g\left( \pi^{-1}(x),e \right) = \tilde{T}_{x,-1} \left( g(x,e) \right)$ and we prove Proposition 1.

By definition, $\tilde{T}_{x,1}(y)$ is strictly increasing in $y$ for $x \in \mathcal{X}_0$, $\tilde{T}_{x,-1}(y)$ is strictly increasing in $y$ for $x \in \mathcal{X}_1$, $\tilde{T}_{x,1}([\underline{y},\overline{y}]) = [\underline{y},\overline{y}]$, and $\tilde{T}_{x,-1}([\underline{y},\overline{y}]) = [\underline{y},\overline{y}]$.
For $n \in \mathbb{N}$, we define $\pi^{n}(x)$ and $\pi^{-n}(x)$ as the follows:
\begin{eqnarray}
& \pi^0(x) \equiv  x  \ \ \ \ \ \ \ \ \ \ \ \ \ \ \ \ \ \ & \text{ for all $x \in \mathcal{X}$}, \nonumber \\
& \pi^n(x)  \equiv  \pi \circ \pi^{n-1}(x) \ \ \ \ \ & \text{ if $\pi^{n-1}(x) \in \mathcal{X}_0$,} \nonumber \\
& \pi^{-n}(x) \equiv  \pi^{-1} \circ \pi^{-(n-1)}(x) & \text{ if $\pi^{-(n-1)}(x) \in \mathcal{X}_1$.} \nonumber
\end{eqnarray}
Because the domain of $\pi$ is $\mathcal{X}_0$, $\pi^n(x)$ does not exist when $\pi^{n-1}(x) \not\in \mathcal{X}_0$.
Using $\pi^n(x)$ and $\pi^{-n}(x)$, for $n \in \mathbb{N}$, we define $\tilde{T}_{x,n} (y)$ and $\tilde{T}_{x,-n}(y)$ as follows:
\begin{eqnarray}
& \tilde{T}_{x,0}(y) \equiv y \ \ \ \ \ \ \ \ \ \ \ \ \ \ \ \ \ \ \ \ \ \ \ \ \ \ \ \ \ \ \ \ \ & \text{ for all $x \in \mathcal{X}$,} \nonumber \\
& \tilde{T}_{x,n}(y) \equiv \tilde{T}_{\pi^{n-1}(x),1} \circ \cdots \circ \tilde{T}_{x,1}(y) \ \ \ \ \ & \text{ if $\pi^{n}(x)$ exists,} \nonumber \\
& \tilde{T}_{x,-n}(y) \equiv \tilde{T}_{\pi^{-(n-1)}(x),-1} \circ \cdots \circ \tilde{T}_{x,-1}(y) & \text{ if $\pi^{-n}(x)$ exists.} \nonumber 
\end{eqnarray}
Then, if $\pi^{n}(x)$ exists for $n \in \mathbb{Z}$, we have
$$
g(\pi^n(x),e) = \tilde{T}_{x,n} \left( g(x,e) \right),
$$
$\tilde{T}_{x,n}(y)$ is strictly increasing in $y$, and $\tilde{T}_{x,n} ([\underline{y},\overline{y}]) = [\underline{y},\overline{y}]$.

This result implies that, if $\pi^n(x)$ exists for $n \in \mathbb{Z}$, we have $\tilde{T}_{x,n}(y) = T_{\pi^n(x),x}(y)$, and hence $T_{\pi^n(x),x}(y)$ is identified.
This information restricts the functional form of $g$.
However, as in Section 3.2, it does not provide the informative bounds of $g$ without other restrictions.

Here, we examine the properties of $\pi(x)$ and $\pi^{-1}(x)$.
Because $F_{X|Z}(x|0) < F_{X|Z}(x|1)$ for $x \in \mathcal{X}$, we have
\begin{equation}\label{pi_property1}
\begin{aligned}
\pi(x) &=& Q_{X|Z}\left( F_{X|Z}(x|0) | 1 \right) \ < \ Q_{X|Z}\left( F_{X|Z}(x|1) | 1 \right) \ = \ x, \\
 \pi^{-1}(x) &=& Q_{X|Z}\left( F_{X|Z}(x|1) | 0 \right) \ > \ Q_{X|Z}\left( F_{X|Z}(x|0) | 0 \right) \ = \ x.
\end{aligned}
\end{equation}
Figure 1 illustrates this intuitively.
Because $X|Z=0$ stochastically dominates $X|Z=1$ and functions $\pi(x)$ and $\pi^{-1}(x)$ satisfy (\ref{pi_property2}), the inequalities hold.

\subsection{Review of \cite{d2015identification} and \cite{torgovitsky2015identification}}

To facilitate the illustration of our identification results, we first review the identification approach of \cite{d2015identification} and \cite{torgovitsky2015identification} when $\underline{x}_0 = \underline{x}_1 = \xi$, although Assumption 3 rules out the case of $\underline{x}_0 = \underline{x}_1 = \xi$.
Additionally, we assume that $g(\xi,e)$ exists and is strictly increasing in $e$.

\cite{d2015identification} and \cite{torgovitsky2015identification} use function $T_{x',x}(y)$ to identify the structural function $g$.
By definition, this function satisfies $g(x',e) = T_{x',x} \left( g(x,e) \right)$.
This function corresponds to $Q_{x'x}$ in \cite{d2015identification}.
We define 
$$
G_x(u) \equiv \int F_{Y|X=x'} \left( T_{x',x}(u) \right) dF_{X}(x').
$$
Then, similar to (\ref{approach_DF1}), we have $G_x \left(g(x,e)\right) = e$, and hence $g(x,e) =  G_x^{-1}(e)$.
If we can identify $T_{x',x}(y)$ for all $x$ and $x'$, we then can point identify the structural function $g$.

Pick an initial point $x_0 \in \mathcal{X}$ (i.e., $x_0 > \xi$) and form a recursive sequence $x_{n+1} = \pi(x_n)$ for $n > 0$.
Because $\underline{x}_0 = \underline{x}_1 = \xi$ implies $\mathcal{X}_1 \subset \mathcal{X}_0$, we have $\pi(x) \in \mathcal{X}_0$ for all $x \in \mathcal{X}$ and there exists a sequence $\{\pi^n(x)\}_{n=1}^{\infty}$.
The sequence $\{x_n\}$ is decreasing by (\ref{pi_property1}) and $x_n > \xi$ for all $n \geq 0$ by the definition of $\pi(x)$.
Hence, sequence $\{x_n\}$ converges to a limiting point.
Because (\ref{pi_property2}) implies 
$$
 F_{X|Z}(x_{n+1}|1) = F_{X|Z}(x_n|0)
$$
and $F_{X|Z}(x|z)$ is continuous in $x$, we have $F_{X|Z}(\lim_{n \rightarrow \infty}x_n|1) = F_{X|Z}(\lim_{n \rightarrow \infty}x_n|0)$.
Because $F_{X|Z}(x|0) < F_{X|Z}(x|1)$ for all $x \in (\xi, \overline{x}_0)$ and $F_{X|Z}(\xi|0) = F_{X|Z}(\xi|1) = 0$, the sequence $\{x_n\}$ converges to $\xi$ for any initial point $x_0 \in \mathcal{X}$.
Figure 2 illustrates this intuitively.
Then, for all $x \in \mathcal{X}$ and $e \in (0,1)$, we obtain
$$
\lim_{n \to \infty} \tilde{T}_{x,n}\left( g(x,e) \right) = \lim_{n \to \infty} g(\pi^n(x),e) = g(\xi,e).
$$
By substituting $g^{-1}(x,y)$ for $e$, we have $\lim_{n \to \infty} \tilde{T}_{x,n}(y)=g\left(\xi,g^{-1}(x,y)\right)$.
Hence, $T_{\xi, x}(y)$ is identified for all $x \in \mathcal{X}$.
By definition of $T_{x',x}(y)$, we have
$$
T_{x',x}(y) = T_{\xi,x'}^{-1}\left( T_{\xi,x}(y) \right).
$$
This implies that $T_{x',x}(y)$ is identified for all $x$ and $x'$.
Hence, as previously discussed, $g$ is point identified.

This approach is not available under Assumption 3 because a convergent sequence $\{\pi^n(x)\}_{n=1}^{\infty}$ does not exist.
When $F_{X|Z}(x|0)$ and $F_{X|Z}(x|1)$ have no intersections, $\pi^n(x)$ lies in $\mathcal{X}_1 \cap \mathcal{X}_0^c = (\underline{x}_1,\underline{x}_0]$ when $n$ is sufficiently large.
If $\pi^n(x)$ is in $\mathcal{X}_1 \cap \mathcal{X}_0^c$, then $\pi^{n+1}(x)$ does not exist.
From the proof of Lemma 1, for all $x \in \mathcal{X}$, $\{n: \pi^n(x)\ \text{exists.} \}$ is a finite set under Assumption 3.
For example, in Figure 1, $\pi(x)$, $\pi^{-1}(x)$, and $\pi^{-2}(x)$ exist but $\pi^2(x)$ and $\pi^{-3}(x)$ do not.

\subsection{Unidentifiability under no shape restrictions}

In this section, we show that if we do not impose additional restrictions beyond Assumptions 1--5, the identified set of $g(x,e)$ can become unbounded.
To show this, we derive the identified set of $g$. 
We define
\begin{eqnarray}
\mathcal{G} & \equiv & \left\{ \tilde{g}:\mathcal{X}\times (0,1) \rightarrow \mathbb{R}: \text{$\tilde{g}(x,e)$ is continuous and strictly increasing in $e$.} \right\}. \nonumber
\end{eqnarray}
\cite{torgovitsky2015identification} derives the identified set of $g$ under another normalization assumption.
Similar to \cite{torgovitsky2015identification}, we obtain the following identified set:
$$
\mathcal{G}_I \equiv \left\{ \tilde{g} \in \mathcal{G} : (\tilde{g}^{-1}(X,Y),V) \indepe Z \ \ \text{and} \ \ \tilde{g}^{-1}(X,Y) \sim U(0,1) \right\},
$$
where $\tilde{g}^{-1}$ is the inverse of $\tilde{g}$ with respect to its last component and $V$ is defined as in (\ref{def_V}).
The independence condition in the identified set is equivalent to the following condition:
$$
P\left( Y \leq \tilde{g}(X,e) |V=v, Z=0 \right) = P\left( Y \leq \tilde{g}(X,e) |V=v, Z=1 \right) \ \ \text{for all $v\in (0,1)$.}
$$
From the definition of $V$, for all $v \in (0,1)$, we have
$$
F_{Y|X,Z}\left( \tilde{g}(x_{v,0},e)|x_{v,0},0 \right) \ = \ F_{Y|X,Z}\left( \tilde{g}(x_{v,1},e)|x_{v,1},1 \right),
$$
where $x_{v,z} \equiv Q_{X|Z}(v|z)$.
Hence, we can rewrite $\mathcal{G}_I$ as
\begin{eqnarray}
\mathcal{G}_I &=& \left\{ \tilde{g} \in \mathcal{G}: \tilde{g}^{-1}(X,Y) \sim U(0,1) \ \ \text{and}  \right. \nonumber \\
& & \left. \tilde{g}\left(x_{v,1},\tilde{g}^{-1}(x_{v,0},\cdot)\right) = Q_{Y|X,Z} \left(F_{Y|X,Z}(\cdot|x_{v,0},0)|x_{v,1},1 \right) \ \text{for all $v$.} \right\}. \label{Identified_set_of_g}
\end{eqnarray}
This expression implies that $T_{x_{v,1}, x_{v,0}}(y)$ is identified for all $v$.
Proposition 1 provides the same result.
The sharp lower and upper bounds of $g(x,e)$ are obtained by $\inf_{\tilde{g}\in \mathcal{G}_I} \tilde{g}(x,e)$ and $\sup_{\tilde{g}\in \mathcal{G}_I} \tilde{g}(x,e)$.

To show that the bounds of $g(x,e)$ can be unbounded, we consider the following simple model:
\begin{eqnarray}
Y &=& \Phi^{-1}(\epsilon), \nonumber \\
X &=& Z(\eta-1) + (1-Z)\eta, \nonumber
\end{eqnarray}
where $\Phi(\cdot)$ is the standard normal distribution function, $\epsilon \sim U(0,1)$, $\eta \sim U(0,1)$, $Z$ is a random Bernoulli variable with $p=0.5$, and $(\epsilon,\eta,Z)$ are mutually independent.
Then, it follows from (\ref{Identified_set_of_g}) that $\tilde{g} \in \mathcal{G}_I$ if and only if
\begin{eqnarray}
& & \tilde{g}(v,e)=\tilde{g}(v-1,e) \ \ \text{for all $e,v \in (0,1)$,} \label{tilde_g_1} \\
& & P(Y \leq \tilde{g}(X,e)) = e \ \ \text{for all $e \in (0,1)$.} \label{tilde_g_2}
\end{eqnarray}
We construct $\tilde{g}_K$ as follows.
First, we define
\begin{equation}
\tilde{g}_K(x,0.5) \equiv \begin{cases}
    \Phi^{-1}\left( 4K(x+1)+0.5-K \right), & -1<x\leq -0.5 \\
    \Phi^{-1}\left( -4K(x+0.5)+0.5+K \right), & -0.5<x\leq 0 \\
    \Phi^{-1}\left( 4Kx+0.5-K \right), & 0<x\leq 0.5 \\
    \Phi^{-1}\left( -4K(x-0.5)+0.5+K \right), & 0.5<x\leq 1
  \end{cases}, \nonumber
\end{equation}
where $-0.5 < K < 0.5$.
Second, for $e \neq 0.5$, we define $\tilde{g}_K(x,e)$ as
\begin{eqnarray}
\tilde{g}_K(x,e) \equiv \begin{cases}
    \Phi^{-1}\left( 2e \Phi(\tilde{g}_K(x,0.5)) \right), & 0<e<0.5 \\
    \Phi^{-1}\left( 1-2(1-e)\{1-\tilde{g}_K(x,0.5)\} \right), & 0.5<e<1 
  \end{cases}. \nonumber
\end{eqnarray}
Then, we confirm that $\tilde{g}_K$ satisfies (\ref{tilde_g_1}) and (\ref{tilde_g_2}) for all $-0.5 < K < 0.5$.
Hence, $\tilde{g}_K$ is an element of $\mathcal{G}_I$ for all $-0.5 < K < 0.5$.
Because $\tilde{g}_K(0,0.5)=\Phi^{-1}(0.5-K)$, the lower and upper bounds of $g(0,0.5)$ are $-\infty$ and $+ \infty$, respectively.
Therefore, in this setting, the identified set of $g$ can be unbounded.

\section{Bounds under additional shape restrictions}

We show the partial identification of $g$ under some shape restrictions.
In Sections 4.1, 4.2, and 4.3, we show the partial identification under monotonicity, concavity, and monotonicity and concavity, respectively.
In Section 4.4, we show that point identification can be achieved when the structural function is flat or linear with respect to $x$ over a given interval.

\subsection{Bounds under monotonicity}

In this section, we propose a method to construct the lower and upper bounds of $g(x,e)$ under monotonicity.
First, we show that a set $\Pi_{x',x}^M$ defined below is nonempty and finite, when $F_{X|Z}(x|0)$ and $F_{X|Z}(x|1)$ have no intersections.
Second, we show that we can partially identify $g(x,e)$ using $\Pi_{x',x}^M$ when $g(x,e)$ is nondecreasing in $x$.

For $(x,x') \in \mathcal{X} \times \mathcal{X}$, we define $\Pi_{x',x}^M$ as
\begin{eqnarray}
 \Pi_{x',x}^M &\equiv & \left\{ (n,m) : n,m \in \mathbb{Z}, \ \text{$\pi^n(x')$ and $\pi^m(x)$ exist, and $\pi^n(x') \leq \pi^m(x)$.} \right\}. \label{Pi}
\end{eqnarray}
In Figure 1, $\Pi^{M}_{x',x}=\{(-1,-2),(0,-2), (0,-1), (1,-2), (1,-1), (1,0) \}$.
The following lemma shows that $\Pi_{x',x}^M$ is nonempty and finite when $F_{X|Z}(x|0)$ and $F_{X|Z}(x|1)$ have no intersections.

\begin{Lemma}
Under Assumptions 1--5, $\Pi_{x',x}^M$, as defined by (\ref{Pi}), is nonempty and finite for all $(x,x') \in \mathcal{X} \times \mathcal{X}$.
\end{Lemma}

Under Assumptions 1--5, for any $x \in \mathcal{X}$ the set $\{n \in \mathbb{Z}: \pi^n(x)\ \text{exists.} \}$ is finite from the proof of Lemma 1.
Hence, $g$ cannot be point identified using the method proposed by \cite{d2015identification} and \cite{torgovitsky2015identification}).

We impose the following assumption:

\begin{Assumption}[Monotonicity]
For $e \in (0,1)$, $g(x,e)$ is nondecreasing in $x$.
\end{Assumption}

The monotonicity assumption holds for many economic models.
For example, the demand function is ordinarily decreasing in price if the income effect is negligible, and economic analyses of production often assume that the production function is monotonically increasing in input.
Monotonicity assumptions of this type have been employed in many studies.
For example, \cite{manski1997monotone} imposes a monotonicity assumption on a response function and shows that the average treatment response is partially identified.

If $(n,m) \in \Pi_{x',x}^M$, Assumption 6 implies that
$$
 \tilde{T}_{x',n} \left( g(x',e) \right) = g(\pi^n(x'),e) \leq g(\pi^m(x),e) = \tilde{T}_{x,m} \left( g(x,e) \right).
$$
Because $\tilde{T}_{x',n}(y)$ is strictly increasing in $y$ and $\tilde{T}_{x',n}\left( [\underline{y},\overline{y}] \right) = [\underline{y},\overline{y}]$, we have $g(x',e) \leq \tilde{T}_{x',n}^{-1} \left( \tilde{T}_{x,m} \left( g(x,e) \right) \right)$ for $(n,m)\in \Pi_{x',x}^M$.
Hence, we have
\begin{equation}
 g(x',e) \leq \min_{(n,m)\in \Pi_{x',x}^M} \tilde{T}_{x',n}^{-1} \left( \tilde{T}_{x,m} \left( g(x,e) \right) \right). \nonumber
\end{equation}
Define
\begin{equation}
\label{def_TMU_TML}
\begin{aligned}
T_{x',x}^{MU}(y) & \equiv \min_{(n,m)\in \Pi_{x',x}^M} \tilde{T}_{x',n}^{-1} \left( \tilde{T}_{x,m} \left( y \right) \right), \\
T_{x',x}^{ML}(y) & \equiv \max_{(n,m)\in \Pi_{x,x'}^M}  \tilde{T}_{x',m}^{-1} \left( \tilde{T}_{x,n} \left( y \right) \right).
\end{aligned}
\end{equation}
Then, $T_{x',x}^{MU}(y)$ is strictly increasing and satisfies
\begin{equation}
g(x',e) \leq T_{x',x}^{MU}\left( g(x,e) \right). \label{TU_M} 
\end{equation}
Similarly, $T_{x',x}^{ML}(y)$ is strictly increasing and satisfies
\begin{equation}
 g(x',e) \geq T_{x',x}^{ML}\left( g(x,e) \right). \label{TL_M}
\end{equation}

As already mentioned, the functions that satisfy (\ref{TU}) and (\ref{TL}) are the upper and lower bounds of $T_{x',x}(y)$, respectively.
Hence, for any $(n,m) \in \Pi_{x',x}^M$, $\tilde{T}_{x',n}^{-1} \left( \tilde{T}_{x,m}(y) \right)$ becomes an upper bound of $T_{x',x}(y)$.
This implies that $T^{MU}_{x',x}(y)$ is the lowest upper bound of $T_{x',x}(y)$ in the sense that $T^{MU}_{x',x}(y)$ is lower than $\tilde{T}_{x',n}^{-1} \left( \tilde{T}_{x,m}(y) \right)$ for any $(n,m) \in \Pi_{x',x}^M$.
Similarly, $T^{ML}_{x',x}(y)$ is the largest lower bound of $T_{x',x}(y)$.

We define 
\begin{eqnarray}
G_x^{ML}(u) &\equiv & \int F_{Y|X=x'} \left( T_{x',x}^{MU} (u) \right) dF_{X}(x'), \nonumber \\
G_x^{MU}(u) &\equiv & \int F_{Y|X=x'} \left( T_{x',x}^{ML} (u) \right) dF_{X}(x') , \nonumber \\
B^{ML}(x,e) &\equiv & \sup_{y:y \leq x} \left\{ \left(G_y^{ML}\right)^{-1}(e) \right\}, \nonumber \\
B^{MU}(x,e) &\equiv & \inf_{y:y \geq x} \left\{ \left(G_y^{MU}\right)^{-1}(e) \right\}. \nonumber
\end{eqnarray}
$G_x^{ML}(u)$ and $G_x^{MU}(u)$ provide the lower and upper bounds of $g(x,e)$ on the basis of arguments (\ref{approach_DF1}) and (\ref{approach_DF2}).
$B^{ML}(x,e)$ and $B^{MU}(x,e)$ strengthen these bounds.

\begin{Theorem}
Under Assumptions 1--6, for all $(x,e) \in \mathcal{X} \times (0,1)$, we have
\begin{equation}
B^{ML}(x,e) \ \leq \ g(x,e) \ \leq \ B^{MU}(x,e). \nonumber
\end{equation}
\end{Theorem}

In the first step, we show that $\left(G_x^{ML}\right)^{-1}(e) \leq g(x,e) \leq \left(G_x^{MU}\right)^{-1}(e)$.
In the second step, we strengthen these bounds to $B^{ML}(x,e) \leq g(x,e) \leq  B^{MU}(x,e)$.
Figure 3 intuitively illustrates this proof.
The idea is similar to that of \cite{manski1997monotone}, who considers the case in which response function $y(t)$ is increasing, where $y(t)$ is a latent outcome with treatment $t$.
He then uses the monotonicity of $y(t)$ to partially identify average response function $E[y(t)]$ when the support of the outcome is bounded.
By contrast, our bounds are bounded even when the support of the outcome is unbounded.

\begin{Simulation}
To illustrate Theorem 1, we consider the following example:
\begin{eqnarray}
\label{Simulation}
\begin{aligned}
Y &= h(X) exp \left(\alpha + \beta \Phi^{-1}(\epsilon) \right) \\
X &= (0.2 + \eta)Z + (1-Z)\{(2-\rho)(\eta-1)+2.2\} ,
\end{aligned}
\end{eqnarray}
where $h(x)$ is an increasing function specified below, $\Phi(\cdot)$ is the standard normal distribution function, $Z$ is a random Bernoulli variable with $p=0.5$, and $(\alpha,\beta)=(0.5,0.5)$.
Suppose that 
\begin{eqnarray}
\epsilon &=& \Phi(U) \nonumber \\
\eta &=& \Phi(V) \nonumber \\
(U,V) &\sim & N\left( \left( \begin{array}{c}
0 \\ 0
\end{array} \right) ,\left( \begin{array}{cc}
1 & 0.3 \\ 0.3 & 1
\end{array} \right) \right). \nonumber
\end{eqnarray}
Then, $\epsilon \sim U(0,1)$ and $\eta \sim U(0,1)$.
In this example, $F_{X|Z}(x|1) = x-0.2$ for $x \in [0.2,1]$ and $F_{X|Z}(x|0) = \frac{1}{2-\rho}(x-2.2)+1$ for $x \in [\rho+0.2,2.2]$.
These functions are depicted in Figure 4.
Conditional distribution functions $F_{X|Z}(x|0)$ and $F_{X|Z}(x|1)$ do not intersect when $\rho > 0$.
When $\rho = 0$, these functions intersect at $x=0.2$.
\cite{torgovitsky2015identification} shows  that $g$ is point identified when $\rho = 0$.

We calculate the bounds of $g(x,0.5)$ using Theorem 1 when $h(x)= h_1(x) \equiv x \ \text{or} \ h(x)= h_2(x) \equiv 2\exp(4(x-1.2))/\{1+\exp(4(x-1.2))\} + 0.2$.
Figures 5 and 6 show these bounds for three different choices of $\rho$: $0.01$, $0.1$, and $0.3$.
For $h_1$ and $h_2$, the bounds become tighter as $\rho$ become smaller.
In particular, the bounds are very close to the true function when $\rho = 0.01$.
This confirms our theoretical result that $B^{ML}(x,e)$ and $B^{MU}(x,e)$ converge to $g(x,e)$ as $\rho \rightarrow 0$.
When $\rho = 0.01$ and $0.1$, the bounds of $h_2$ are tighter than that of $h_1$.
This result is caused by $h_2(x)$ being flatter than $h_1(x)$ over a particular interval.
As discussed later, Theorem 3 shows that $g$ is point identified when $g(x,e)$ is flat with respect to $x$ over a given interval.
\end{Simulation}

\begin{Remark}
Although our bounds may not be sharp in general, we can derive the identified set of $g$ under Assumption 6.
We define
$$
\mathcal{G}^M \equiv \{\tilde{g} \in \mathcal{G} : \text{$\tilde{g}(x,e)$ is nondecreasing in $x$.} \}.
$$
Then, similar to (\ref{Identified_set_of_g}), the identified set of $g$ under Assumption 6 is obtained by
\begin{eqnarray}
\mathcal{G}_I^M &=& \left\{ \tilde{g} \in \mathcal{G}^M : \tilde{g}^{-1}(X,Y) \sim U(0,1) \ \ \text{and}  \right. \nonumber \\
& & \left. \tilde{g}\left(x_{v,1},\tilde{g}^{-1}(x_{v,0},\cdot)\right) = Q_{Y|X,Z} \left(F_{Y|X,Z}(\cdot|x_{v,0},0)|x_{v,1},1 \right) \ \text{for all $v$.} \right\}. \nonumber
\end{eqnarray}
Hence, the sharp lower and upper bounds of $g(x,e)$ are $\inf_{\tilde{g} \in \mathcal{G}_I^M} \tilde{g}(x,e)$ and $\sup_{\tilde{g} \in \mathcal{G}_I^M} \tilde{g}(x,e)$, respectively.
However, these bounds may not coincide with $B^{ML}(x,e)$ and $B^{MU}(x,e)$.
Actually, in some settings, $\left(G_x^{ML}\right)^{-1}(e)$ and $\left(G_x^{MU}\right)^{-1}(e)$ are not nondecreasing in $x$.
This implies that $\left(G_x^{ML}\right)^{-1}(e)$ and $\left(G_x^{MU}\right)^{-1}(e)$ are not sharp in general.

It is difficult to compute $\mathcal{G}_I^M$ because $\mathcal{G}^M$ is infinite dimensional.
By contrast, $B^{ML}(x,e)$ and $B^{MU}(x,e)$ have closed-form expressions and are hence computable.
In Simulation 1, we compute $B^{ML}(x,e)$ and $B^{MU}(x,e)$ in some settings, and in Section 5, we show that our bounds are informative in real data.
\end{Remark}

\subsection{Bounds under concavity}

In this section, we propose a method to construct the lower and upper bounds of $g(x,e)$ under concavity.
First, we show that a set $\Pi_{x',x}^C$ defined below is nonempty and finite.
Second, we show that we can partially identify $g$ using $\Pi_{x',x}^C$ when $g(x,e)$ is concave in $x$.

For $(x,x') \in \mathcal{X} \times \mathcal{X}$, we define $\Pi^C_{x',x}$ as
\begin{eqnarray}
\Pi^C_{x',x} &\equiv & \left\{ (n,m) : n,m \in \mathbb{Z}, \ \text{$\pi^n(x')$, $\pi^{n-1}(x)$ and $\pi^m(x)$ exist,}\right. \nonumber \\
 & & \left. \ \ \ \ \ \ \ \text{and $\pi^n(x') \leq \pi^m(x) \leq \pi^{n-1}(x')$.} \right\}. \label{Pi^C}
\end{eqnarray}
In Figure 1, $\Pi^C_{x',x} = \{ (0,-1), (1,0) \}$.
The following lemma shows that $\Pi^C_{x',x}$ is nonempty and finite, similar to Lemma 1.

\begin{Lemma}
Under Assumptions 1--5, $\Pi^C_{x',x}$ as defined by (\ref{Pi^C}) is nonempty and finite for all $(x,x') \in \mathcal{X} \times \mathcal{X}$.
\end{Lemma}

Similar to Section 3, we impose the following assumption.

\begin{Assumption}[Concavity]
For $e \in (0,1)$, $g(x,e)$ is concave in $x$.
\end{Assumption}

The concavity assumption holds in many economic models.
For example, economic analyses of production often assume that the production function is concave in inputs.
For instance, \cite{manski1997monotone} assumes concavity and shows that the average treatment response is partially identified.
Further, \cite{d2013nonlinear} achieves the partial identification of the average treatment on the treated effect using a locally concavity assumption.

As in Section 3, if we identify functions $T_{x',x}^U(y)$ and $T_{x',x}^L(y)$ that are strictly increasing in $y$, surjective, and satisfy (\ref{TU}) and (\ref{TL}), we can obtain the lower and upper bounds of $g(x,e)$.
Hence, we consider constructing functions $T_{x',x}^U(y)$ and $T_{x',x}^L(y)$ that are strictly increasing in $y$, surjective, and satisfy (\ref{TU}) and (\ref{TL}).

If $(n,m) \in \Pi^C_{x',x}$, from Assumption 7, we have
$$
t_{x',x}^{n,m} \cdot \tilde{T}_{x',n}\left( g(x',e) \right) + (1-t_{x',x}^{n,m}) \cdot \tilde{T}_{x',n-1}\left( g(x',e) \right) \leq \tilde{T}_{x,m} \left( g(x,e) \right),
$$
where $t_{x',x}^{n,m}= \left( \pi^{n-1}(x') - \pi^m(x) \right) / \left( \pi^{n-1}(x') - \pi^n(x') \right)$.
We define
$$
\tilde{\mathcal{T}}_{x',x,n,m}(y) \equiv t_{x',x}^{n,m} \cdot \tilde{T}_{x',n}(y) + (1-t_{x',x}^{n,m}) \cdot \tilde{T}_{x',n-1}(y).
$$
Because $\tilde{T}_{x',n}(y)$ and $\tilde{T}_{x',n-1}(y)$ are surjective and strictly increasing in $y$, we obtain 
$$
g(x',e) \leq \min_{(n,m)\in\Pi^C_{x',x}} \tilde{\mathcal{T}}_{x',x,n,m}^{-1} \left( \tilde{T}_{x,m} \left( g(x,e) \right) \right).
$$
Define
\begin{equation}\label{def_TCU_TCL}
\begin{aligned}
T_{x',x}^{CU}(y) & \equiv & \min_{(n,m)\in\Pi^C_{x',x}} \tilde{\mathcal{T}}_{x',x,n,m}^{-1} \left( \tilde{T}_{x,m} \left( y \right) \right), \\
T_{x',x}^{CL}(y) & \equiv & \max_{(n,m)\in\Pi^C_{x,x'}} \tilde{T}_{x',m}^{-1} \left( \tilde{\mathcal{T}}_{x,x',n,m}(y) \right).
\end{aligned}
\end{equation}
Then, $T_{x',x}^{CU}(y)$ and $T_{x',x}^{CL}(y)$, as defined in (\ref{def_TCU_TCL}), are strictly increasing in $y$ and satisfy
\begin{eqnarray}
 g(x',e) \leq T_{x',x}^{CU} \left( g(x,e) \right), \label{TU_C} \\
 g(x',e) \geq  T_{x',x}^{CL} \left( g(x,e) \right). \label{TL_C} 
\end{eqnarray}

We define
\begin{eqnarray}
G_x^{CL}(u) &\equiv & \int F_{Y|X=x'} \left( T_{x',x}^{CU} (u) \right) dF_{X}(x'), \nonumber \\
G_x^{CU}(u) &\equiv & \int F_{Y|X=x'} \left( T_{x',x}^{CL} (u) \right) dF_{X}(x'), \nonumber
\end{eqnarray}
$$
 B^{CL}(x,e) \equiv \sup_{y,y':y < x < y'} \left\{ \left( \frac{x-y}{y'-y} \right) \left( G_{y'}^{CL} \right)^{-1}(e) + \left( \frac{y'-x}{y'-y} \right) \left( G_y^{CL} \right)^{-1}(e) \right\},
$$
\begin{eqnarray}
 B^{CU}(x,e) \equiv & \min \left[  \inf_{y,y':x < y < y'} \left\{ \left( \frac{x-y}{y'-y} \right)B^{CL}(y',e) + \left( \frac{y'-x}{y'-y} \right) \left( G_y^{CU} \right)^{-1}(e) \right\}, \right. \nonumber \\
 & \left. \inf_{y,y':y' < y < x} \left\{ \left( \frac{y-x}{y-y'} \right)B^{CL}(y',e) + \left( \frac{x-y'}{y-y'} \right) \left( G_y^{CU} \right)^{-1}(e) \right\} \right]. \nonumber
\end{eqnarray}
$G_x^{CL}(u)$ and $G_x^{CU}(u)$ provide the lower and upper bounds of $g(x,e)$ as per (\ref{approach_DF1}) and (\ref{approach_DF2}).
$B^{CL}(x,e)$ and $B^{CU}(x,e)$ strengthen these bounds.

\begin{Theorem}
Under Assumptions 1--5 and 7, for all $(x,e) \in \mathcal{X} \times (0,1)$, we have
$$
 B^{CL}(x,e) \leq g(x,e) \leq B^{CU}(x,e).
$$
\end{Theorem}

Similar to Theorem 1, we can show that $\left( G_x^{CL} \right)^{-1}(e) \leq g(x,e) \leq \left( G_x^{CU} \right)^{-1}(e)$.
We strengthen the bounds to $B^{CL}(x,e) \leq g(x,e) \leq B^{CU}(x,e)$ using the concavity of $g(x,e)$ in $x$.
Figure 7 intuitively illustrates this proof.
A similar approach is used by \cite{manski1997monotone}, namely utilizing the concavity of the response function to partially identify the average response function when the support of the outcome is bounded.
However, our approach does not require information on the infimum and supremum of the support of the outcome.

This identification approach is somewhat similar to that of \cite{d2013nonlinear}, who study the identification of nonseparable models with continuous, endogenous regressors, using repeated cross sections.
Specifically, they consider the following model:
$$
Y_t = g_t(X_t,A_t), \ \ t = 1, \cdots, T,
$$
where $A_t$ is an unobserved heterogeneous factor.
They show that, under the assumptions that $A_t|V_t \equiv F_{X_t}(X_t)=v \sim A_s|V_s \equiv F_{X_s}(X_s)=v$ and $g_t(x,a) = m_t(g(x,a))$, the average treatment on treated effect $\Delta^{ATT}(x,x') \equiv \mathbb{E}[g_T(x,A_T)-g_T(x',A_T)|X_T=x]$ is identified when $F_{X_T}(x) = F_{X_t}(x')$.
Under this assumption, $\Delta^{ATT}(x,x')$ is not identified if $F_{X_T}(x) \neq F_{X_t}(x')$ for all $t \in \{1, \cdots , T-1\}$.
However, they show that $\Delta^{ATT}(x,x')$ is partially identified if $x \mapsto g(x,a)$ is locally concave.

\begin{Simulation}
To illustrate Theorem 2, we consider model (\ref{Simulation}).
We set $h(x) = -(x-1.2)^2 + 1.5$ and calculate the bounds of $g(x,0.5)$ using Theorem 2.
Figure 8 shows $B^{CL}(x,0.5)$ and $B^{CU}(x,0.5)$ for three different choices of $\rho$: 0.1, 0.5, and 0.8.
Similar to Simulation 1, the bounds become tighter as $\rho$ become smaller.
In particular, the bounds are very close to the true function when $\rho = 0.1$.
This confirms our theoretical result that $B^{CL}(x,e)$ and $B^{CU}(x,e)$ converge to $g(x,e)$ as $\rho \rightarrow 0$.
\end{Simulation}

\subsection{Bounds under monotonicity and concavity}

In several cases, such as the production function, we can assume that both Assumptions 6 and 7 hold.
Then, it follows from Theorems 1 and 2 that
\begin{equation}
\max \{B^{ML}(x,e),B^{CL}(x,e)\} \leq g(x,e) \leq \min \{B^{MU}(x,e),B^{CU}(x,e)\}. \label{bounds_M_C_1}
\end{equation}
In this case, we can obtain tighter bounds in the following manner.
We define
\begin{eqnarray}
T_{x',x}^{MCU}(y) & \equiv & \min\{T^{MU}_{x',x}(y),T^{CU}_{x',x}(y)\}, \nonumber \\
T_{x',x}^{MCL}(y) & \equiv & \max\{T^{ML}_{x',x}(y),T^{CL}_{x',x}(y)\}, \nonumber \\ 
G_x^{MCL}(u) &\equiv & \int F_{Y|X=x'} \left( T_{x',x}^{MCU} (u) \right) dF_{X}(x'), \nonumber \\
G_x^{MCU}(u) &\equiv & \int F_{Y|X=x'} \left( T_{x',x}^{MCL} (u) \right) dF_{X}(x'). \nonumber
\end{eqnarray}
Similarly to the above arguments, we have $g(x',e) \leq T_{x',x}^{MCU}(g(x,e))$ and $g(x',e) \geq T_{x',x}^{MCL}(g(x,e))$, and hence we can obtain
\begin{eqnarray}
\left( G_x^{MCL} \right)^{-1}(e) \leq g(x,e) \leq \left( G_x^{MCU} \right)^{-1}(e). \nonumber
\end{eqnarray}
We define
\begin{eqnarray}
\tilde{B}^{MCL}(x,e) &\equiv & \sup_{y:y \leq x} \left\{ \left(G_y^{MCL}\right)^{-1}(e) \right\}, \nonumber \\
\tilde{B}^{MCU}(x,e) &\equiv & \inf_{y:y \geq x} \left\{ \left(G_y^{MCU}\right)^{-1}(e) \right\}, \nonumber 
\end{eqnarray}
$$
 \hat{B}^{MCL}(x,e) \equiv \sup_{y,y':y < x < y'} \left\{ \left( \frac{x-y}{y'-y} \right) \left( G_{y'}^{MCL} \right)^{-1}(e) + \left( \frac{y'-x}{y'-y} \right) \left( G_y^{MCL} \right)^{-1}(e) \right\},
$$
\begin{eqnarray}
 \hat{B}^{MCU}(x,e) \equiv & \min \left[  \inf_{y,y':x < y < y'} \left\{ \left( \frac{x-y}{y'-y} \right)\hat{B}^{MCL}(y',e) + \left( \frac{y'-x}{y'-y} \right) \left( G_y^{MCU} \right)^{-1}(e) \right\}, \right. \nonumber \\
 & \left. \inf_{y,y':y' < y < x} \left\{ \left( \frac{y-x}{y-y'} \right)\hat{B}^{MCL}(y',e) + \left( \frac{x-y'}{y-y'} \right) \left( G_y^{MCU} \right)^{-1}(e) \right\} \right]. \nonumber
\end{eqnarray}
Then, from the above results, both $\tilde{B}^{MCU}(x,e)$ and $\hat{B}^{MCU}(x,e)$ are upper bounds of $g(x,e)$.
Similarly, both $\tilde{B}^{MCL}(x,e)$ and $\hat{B}^{MCL}(x,e)$ are also lower bounds of $g(x,e)$.
Therefore, we can obtain
\begin{equation}
\max\{\tilde{B}^{MCL}(x,e), \hat{B}^{MCL}(x,e)\} \leq g(x,e) \leq \min\{\tilde{B}^{MCL}(x,e),\hat{B}^{MCL}(x,e)\}. \label{bounds_M_C_2}
\end{equation}
Clearly, these bounds are tighter than (\ref{bounds_M_C_1}).

\subsection{Point identification}

First, we show that $g(x,e)$ is point identified under monotonicity when the structural function is flat in $x$ over a given interval.
The argument in Section 4.1 shows that the bounds become tighter as the difference between $g(x',e)$ and $T_{x',x}^{MU} \left( g(x,e) \right)$ (or $T_{x',x}^{ML} \left( g(x,e) \right)$) decreases.
The following theorem shows that, if $g(x,e)$ is flat in $x$ over a given interval, inequalities (\ref{TU}) and (\ref{TL}) become equalities and structural function $g$ is point identified.

\begin{Theorem}
Under Assumptions 1--6, if there exists $\tilde{x} \in \mathcal{X}_0 \cap \mathcal{X}_1$ such that $x \mapsto g(x,e)$ is constant on $[\pi(\tilde{x}),\pi^{-1}(\tilde{x})]$ for each $e \in (0,1)$, then $B^{ML}(x,e)$ and $B^{MU}(x,e)$ coincide with $g(x,e)$ for all $(x,e) \in \mathcal{X} \times (0,1)$.
Hence, $g$ is point identified.
This result holds even when the interval $[\pi(\tilde{x}),\pi^{-1}(\tilde{x})]$ is unknown.
\end{Theorem}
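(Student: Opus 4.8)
My plan is to prove point identification by showing that the two bounds of Theorem~1 already coincide with $g$ itself, namely $B^{ML}(x,e)=g(x,e)=B^{MU}(x,e)$ for all $(x,e)\in\mathcal{X}\times(0,1)$. Since $B^{ML}$ and $B^{MU}$ are built only from the sets $\Pi^{M}_{x',x}$ and the maps $\tilde{T}^{(n)}_{x}$ and never refer to $\tilde{x}$, this route simultaneously delivers the ``even when the interval is unknown'' clause: we locate nothing, we just show the existing bounds collapse. By the sandwich in Theorem~1 together with the monotonicity of $g(\cdot,e)$, it suffices to prove exactness of the lower bound, $(G^{ML}_{x})^{-1}(e)=g(x,e)$, and symmetrically for $G^{MU}_{x}$. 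Inspecting the chain (\ref{approach_DF1}), one has $G^{ML}_{x}(g(x,e))\ge e$, with equality if and only if $T^{MU}_{x',x}(g(x,e))=g(x',e)$ for $F_{X}$-a.e.\ $x'$ (using that $F_{Y|X=x'}$ is strictly increasing by Assumption~4); and (\ref{TU_M}) already gives ``$\ge$''. So the whole problem reduces to showing that the upper transformation is \emph{tight}: $T^{MU}_{x',x}(g(x,e))=g(x',e)$ for all $x,x'$, and the analogous statement for $T^{ML}_{x',x}$.

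The engine for tightness is a geometric fact about the orbits of $\pi$, which I expect to be the main obstacle. Write $I\equiv[\pi(\tilde{x}),\pi^{-1}(\tilde{x})]$ and recall that $\pi$ is strictly increasing, that $\pi(\pi^{-1}(\tilde{x}))=\tilde{x}$, and that $\pi(x)<x<\pi^{-1}(x)$ by (\ref{pi_property1}). I would show that the orbit $\{\pi^{k}(x)\}$ of every $x\in\mathcal{X}$ meets $I$, in fact in two consecutive points. The ``no-skip'' argument: let $m_{1}$ be the smallest valid index with $\pi^{m_{1}}(x)\le\pi^{-1}(\tilde{x})$, which exists because the orbit eventually descends into $(\underline{x}_{1},\underline{x}_{0}]$ as in Lemma~1. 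Since $\pi^{m_{1}-1}(x)>\pi^{-1}(\tilde{x})$ and $\pi$ is increasing, $\pi^{m_{1}}(x)=\pi(\pi^{m_{1}-1}(x))>\pi(\pi^{-1}(\tilde{x}))=\tilde{x}$, so $\pi^{m_{1}}(x)\in(\tilde{x},\pi^{-1}(\tilde{x})]\subset I$; applying $\pi$ once more gives $\pi^{m_{1}+1}(x)\in(\pi(\tilde{x}),\tilde{x}]\subset I$. The delicate part will be the boundary bookkeeping: checking that $\tilde{x}\in\mathcal{X}_{0}\cap\mathcal{X}_{1}$ so that both $\pi(\tilde{x})$ and $\pi^{-1}(\tilde{x})$ exist, that the indices $m_{1}$ and $m_{1}-1$ are legitimate, and the degenerate case in which the largest orbit point (at the most negative index) already lies in $I$.

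With the geometric fact in hand, tightness follows by direct bookkeeping with (\ref{def_TMU_TML}) and (\ref{Pi}). For fixed $x,x'$ I would pick a valid $m$ with $\pi^{m}(x)\in(\tilde{x},\pi^{-1}(\tilde{x})]$ and a valid $n$ with $\pi^{n}(x')\in(\pi(\tilde{x}),\tilde{x}]$; then $\pi^{n}(x')\le\tilde{x}<\pi^{m}(x)$, so $(n,m)\in\Pi^{M}_{x',x}$, and both points lie in $I$, so flatness gives $g(\pi^{n}(x'),e)=g(\pi^{m}(x),e)$. Proposition~1 then yields $\tilde{T}^{(m)}_{x}(g(x,e))=g(\pi^{m}(x),e)=g(\pi^{n}(x'),e)=\tilde{T}^{(n)}_{x'}(g(x',e))$, hence $(\tilde{T}^{(n)}_{x'})^{-1}(\tilde{T}^{(m)}_{x}(g(x,e)))=g(x',e)$. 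As $T^{MU}_{x',x}(g(x,e))$ is the minimum of such expressions over $\Pi^{M}_{x',x}$, I obtain $T^{MU}_{x',x}(g(x,e))\le g(x',e)$, which with (\ref{TU_M}) forces equality; the mirror-image choice of indices handles $T^{ML}_{x',x}$.

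Finally I would substitute tightness back into the definitions: $G^{ML}_{x}(g(x,e))=\int F_{Y|X=x'}(g(x',e))\,dF_{X}(x')=e$ by the equalities in (\ref{approach_DF1}), so $(G^{ML}_{x})^{-1}(e)=g(x,e)$, and likewise $(G^{MU}_{x})^{-1}(e)=g(x,e)$. Because $g(\cdot,e)$ is nondecreasing, $B^{ML}(x,e)=\sup_{y\le x}g(y,e)=g(x,e)$ and $B^{MU}(x,e)=\inf_{y\ge x}g(y,e)=g(x,e)$, so the Theorem~1 bounds collapse to $g$ and point identification follows, with no appeal to the location of $\tilde{x}$.
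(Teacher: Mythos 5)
Your proposal is correct and follows essentially the same route as the paper's own proof: both arguments hinge on showing that every orbit $\{\pi^{k}(x)\}$ contains two consecutive points inside $[\pi(\tilde{x}),\pi^{-1}(\tilde{x})]$, and then use flatness of $g(\cdot,e)$ on that interval to turn the inequalities (\ref{TU_M}) and (\ref{TL_M}) into equalities, which collapses the Theorem~1 bounds onto $g$ without needing to know $\tilde{x}$. The only differences are organizational: your minimal-index ``no-skip'' argument replaces the paper's four-case analysis of where $x$ sits relative to the interval, and you spell out the final collapse $B^{ML}(x,e)=g(x,e)=B^{MU}(x,e)$ that the paper leaves implicit.
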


In the first step, we show that, for all $x \in \mathcal{X}$, $n \in \mathbb{Z}$ exists such that $\pi^n(x), \pi^{n+1}(x) \in [\pi(\tilde{x}),\pi^{-1}(\tilde{x})]$.
In the second step, we show $g$ is point identified.
Because $g(x,e)$ is constant in $x$ conditional on $[\pi(\tilde{x}),\pi^{-1}(\tilde{x})]$, we have $g(x',e)= T_{x',x}^{MU} \left( g(x,e) \right)$ and $g(x',e)= T_{x',x}^{ML} \left( g(x,e) \right)$ for all $x, x' \in \mathcal{X}$ and $e \in (0,1)$.
Hence, $B^{ML}(x,e)$ and $B^{MU}(x,e)$ coincide with $g(x,e)$ because inequalities (\ref{TU_M}) and (\ref{TL_M}) become equalities.

\begin{Simulation}
To illustrate Theorem 3, we consider model (\ref{Simulation}).
We set $h(x) = \max\{0,x-\delta\}+0.5$ and $\rho=0.3$.
Figures 9--11 show $B^{ML}(x,0.5)$ and $B^{MU}(x,0.5)$ for three different choices of $\delta$: $0.4$, $0.55$, and $1.2$.
In this model, $g(x,e)$ is constant on $[0.2,\delta]$.
Because $\pi(0.5)=0.2$ and $\pi^{-1}(0.5)=1.01$, interval $[0.2,\delta]$ covers $[\pi(0.5),0.5]$ when $\delta=0.55$ and covers $[\pi(0.5),\pi^{-1}(0.5)]$ when $\delta=1.2$.
Hence, the condition of Theorem 3 is satisfied only when $\delta=1.2$.
In Figure 11, $B^{ML}(x,0.5)$ and $B^{MU}(x,0.5)$ coincide with $g(x,0.5)$ when $\delta = 1.2$.
By contrast, when $\delta = 0.4$ and $0.55$, $g(x,0.5)$ is not point identified.
\end{Simulation}

Next, we show that $g(x,e)$ is point identified under concavity when the structural function is linear in $x$ over a given interval.
Similar to Theorem 3, the following theorem shows that, if $g(x,e)$ is linear in $x$ over a particular interval, inequalities (\ref{TU_C}) and (\ref{TL_C}) become equalities, and $B^{CL}(x,e)$ and $B^{CU}(x,e)$ coincide with $g(x,e)$.

\begin{Theorem}
Under Assumptions 1--5 and 7, if $\tilde{x} \in \mathcal{X}$ exists such that $g(x,e)$ is linear in $x$ on $[\pi(\tilde{x}),\pi^{-1}(\tilde{x})]$, then $B^{CL}(x,e)$ and $B^{CU}(x,e)$ coincide with $g(x,e)$.
Hence, $g$ is point-identified.
This result holds even if interval $[\pi(\tilde{x}),\pi^{-1}(\tilde{x})]$ is unknown.
\end{Theorem}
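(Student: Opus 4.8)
The plan is to mirror the two–step structure of the proof of Theorem 2, replacing the constancy of $g$ by its linearity and the pairwise equality by the three–point relation underlying $\Pi^C_{x',x}$. Concretely, I would aim to show that linearity forces the concavity inequalities (\ref{TU_C}) and (\ref{TL_C}) to become equalities for every pair $x,x'\in\mathcal{X}$, which by the argument leading to (\ref{approach_DF1})--(\ref{approach_DF2}) yields $\left(G_x^{CL}\right)^{-1}(e)=g(x,e)=\left(G_x^{CU}\right)^{-1}(e)$ and hence point identification.

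For Step 1 I would reuse Step 1 of the proof of Theorem 2 essentially verbatim: for every $x\in\mathcal{X}$ there exists $n\in\mathbb{Z}$ with $\pi^{n+1}(x)\le\tilde{x}\le\pi^{n}(x)$ and both $\pi^{n}(x),\pi^{n+1}(x)\in[\pi(\tilde{x}),\pi^{-1}(\tilde{x})]$. That four–case argument uses only the monotonicity (\ref{pi_property3}) of $\pi^{k}$, the finiteness from the proof of Lemma 1, and $\pi(\tilde{x})<\tilde{x}<\pi^{-1}(\tilde{x})$ from (\ref{pi_property1}); it does not touch $g$, so it transfers unchanged. The only new input I need is that such a consecutive pair straddling $\tilde{x}$ exists and lies in the linear interval.

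The heart of Step 2 is a geometric observation: these straddling pairs are nested–free. Writing $b\equiv\pi^{-1}(\tilde{x})$ and denoting the upper point of the straddling pair of $x$ by $p_x\equiv\pi^{n}(x)\in[\tilde{x},b]$, the pair is exactly $[\pi(p_x),p_x]$, with $\pi(p_x)\le\tilde{x}\le p_x$ since $\pi(b)=\tilde{x}$. Because $\pi$ is strictly increasing, $p_x\le p_{x'}\Leftrightarrow\pi(p_x)\le\pi(p_{x'})$, so no such interval can strictly contain another; they only slide monotonically and always overlap through $\tilde{x}$. Hence, given $x,x'$ and assuming without loss of generality $p_x\le p_{x'}$, the upper point $p_x$ of the $x$–orbit satisfies $\pi(p_{x'})\le\tilde{x}\le p_x\le p_{x'}$, so $p_x\in[\pi(p_{x'}),p_{x'}]$ and the triple $\pi(p_{x'}),\,p_x,\,p_{x'}$ gives an element of $\Pi^C_{x',x}$ with all three arguments in the linear interval; symmetrically $\pi(p_{x'})\in[\pi(p_x),p_x]$ gives an element of $\Pi^C_{x,x'}$ with its three arguments in the linear interval. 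Invoking linearity — a function linear on an interval attains the convex–combination relation with equality for any three of its arguments there — turns the defining inequalities of $T^{CU}_{x',x}$ and $T^{CL}_{x',x}$ into equalities, so $g(x',e)=T^{CU}_{x',x}(g(x,e))=T^{CL}_{x',x}(g(x,e))$ for all $x,x'\in\mathcal{X}$ and $e\in(0,1)$. Point identification of $g$ then follows as above, and since the construction never uses the location of $\tilde{x}$, the interval may be unknown.

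The main obstacle is precisely the geometric claim of Step 2: in the flat case of Theorem 2 any two interior orbit points sufficed because $g$ was constant, whereas here $\Pi^C$ demands a genuine three–point sandwich $\pi^{n}(x')\le\pi^{m}(x)\le\pi^{n-1}(x')$ with all three arguments inside the linear region, and one must secure such a sandwich for $\Pi^C_{x',x}$ and simultaneously for $\Pi^C_{x,x'}$ so that both the lower and the upper bound are tight for the same pair. Verifying that the nested–free, monotone–sliding structure of the intervals $[\pi(p),p]$ delivers both sandwiches at once is the crux; once that is in hand, everything else reduces to the equality version of the estimates already established for Theorem 3.
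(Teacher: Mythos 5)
Your proposal is correct and follows essentially the same route as the paper's proof: reuse Step~1 of Theorem~2 to get consecutive orbit points straddling $\tilde{x}$ inside $[\pi(\tilde{x}),\pi^{-1}(\tilde{x})]$, extract from the interleaving of the two straddling pairs a three-point sandwich for $\Pi^C_{x',x}$ and one for $\Pi^C_{x,x'}$ simultaneously, and let linearity turn (\ref{TU_C}) and (\ref{TL_C}) into equalities, yielding point identification. The only difference is one of exposition: the paper compresses your ``nested-free'' geometric argument into the single phrase ``Similar to Theorem 2, there exist $n,m$\dots'', whereas you spell out why both sandwiches exist at once, which is precisely the step the paper leaves implicit.
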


\begin{Example}[Quantile regression models]
We assume $g(X,\epsilon)=\theta_0(\epsilon)+\theta_1(\epsilon)X$, where $\theta_0(e)+\theta_1(e)x$ is strictly increasing in $e$ for all $x \in \mathcal{X}$.
This model is a quantile regression model with endogeneity.
The $\tau$-th quantile function of $g(x,\epsilon)$ is $\theta_0(\tau)+\theta_1(\tau)x$.
In this case, structural function $g(x,e) = \theta_0(e) + \theta_1(e)x$ is linear in $x$.
Hence, Theorem 4 shows that $\theta_0(e)$ and $\theta_1(e)$ are identified if binary instruments are available.

In this case, we can identify $\theta_0\left( Q_{\epsilon|\eta}(\tau|v) \right)]$ and $\theta_1\left( Q_{\epsilon|\eta}(\tau|v) \right)$ by another approach.
As in Section 3, we obtain $\epsilon|X=Q_{X|Z}(v|z), Z=z \ \sim \ \epsilon|\eta=v$ for all $v \in (0,1)$ and $z \in \{0,1\}$.
This implies that
\begin{eqnarray}
Q_{Y|X,Z}\left(\tau|Q_{X|Z}(v|0),0 \right) &=& \theta_0\left( Q_{\epsilon|\eta}(\tau|v) \right) + \theta_1\left( Q_{\epsilon|\eta}(\tau|v) \right) \times Q_{X|Z}(v|0), \nonumber \\
Q_{Y|X,Z}\left(\tau|Q_{X|Z}(v|1),1 \right) &=& \theta_0\left( Q_{\epsilon|\eta}(\tau|v) \right) + \theta_1\left( Q_{\epsilon|\eta}(\tau|v) \right) \times Q_{X|Z}(v|1). \nonumber 
\end{eqnarray}
Because $Q_{X|Z}(v|0) \neq Q_{X|Z}(v|1)$ under Assumption 3, for all $\tau \in (0,1)$ and $v \in (0,1)$, we can obtain $\theta_0\left( Q_{\epsilon|\eta}(\tau|v) \right)$ and $\theta_1\left( Q_{\epsilon|\eta}(\tau|v) \right)$ from the above equations.
This result is similar to the identification results of \cite{chesher2003identification} and \cite{jun2009local}.

The above model is a special case of the linear correlated random coefficients (CRC) model.
\cite{masten2016identification} consider the linear CRC model and show that the expectations of coefficients are identified.
In this model, we can also identify the expectations of coefficients as $\mathbb{E}[\theta_j(\epsilon)]$.
Let $U$ be a uniformly distributed random variable.
Then, it follows from $Q_{\epsilon|\eta}(U|v) \sim \epsilon|\eta=v$ that $\int_0^1 \theta_j\left( Q_{\epsilon|\eta}(\tau|v) \right) d\tau = \mathbb{E}\left[ \theta_j\left( Q_{\epsilon|\eta}(U|v) \right) \right] = \mathbb{E}[\theta_j(\epsilon)|\eta=v]$.
Hence, since $\eta$ is uniformly distributed, we have $\int_0^1 \int_0^1\theta_j\left( Q_{\epsilon|\eta}(\tau|v) \right) d\tau dv = \mathbb{E}[\theta_j(\epsilon)]$.
\end{Example}

\section{Calculating the bounds using real data}

In this section, we compute the bounds defined in Theorem 1 using the data in \cite{macours2012cash} and show that our bounds are informative.
Specifically, \cite{macours2012cash} analyze the effect of income on early childhood cognitive development using the \textit{Atenci\'{o}n a Crisis} program, a cash transfer program implemented in rural areas of Nicaragua.
As in Example 1, we focus on income effects on early childhood cognitive development.

In the analysis, we use only children between five and seven years to control for age effects.
The sample size for this analysis is 447, the size of the treatment group is 206, and that of the control group is 241.
Following \cite{macours2012cash}, we use a standardized test score of receptive vocabulary (TVIP) as the outcome of a child's cognitive development.
The average test score is 0.449 and the standard deviation 1.212.
We use the logarithm of total consumption per capita as the endogenous explanatory variable $X$ and the control indicator as the instrument $Z$.
The OLS and IV estimates of the effect of $X$ on $Y$ are 0.592 and 0.841, respectively.
We assume that the effect of income on a child's cognitive development is nonnegative.
Hence, we assume the monotonicity of the structural function and compute the lower and upper bounds under monotonicity.

We estimate the conditional distribution and quantile functions, $F_{Y|X,Z}$, $F_{X|Z}$, $Q_{Y|X,Z}$, and $Q_{X|Z}$, and compute the bounds defined in Theorem 1 by treating these estimates as true functions.
Figure 12 shows the estimates of $F_{X|Z}(x|0)$ and $F_{X|Z}(x|1)$.
Because these functions do not have any intersections, Assumption 3 (i) is satisfied.
Although the distance of the conditional distribution functions seem to be close at the endpoints of the support, the distance of the conditional quantile functions is not close to 0.
Indeed, we have $Q_{X|Z}(0.99|0)-Q_{X|Z}(0.99|1)=0.122$ and $Q_{X|Z}(0.01|0)-Q_{X|Z}(0.01|1)=0.610$.
In addition, the empirical supports of $X|Z=0$ and $X|Z=1$ are $[7.605, 10.010]$ and $[5.900,9.850]$, and hence the boundaries of the empirical supports satisfy Assumption 3 (ii).
Since the estimates of the tail of the probability distributions are unreliable, we only use the estimates of $F_{X|Z}(x|0)$ and $F_{X|Z}(x|1)$ between 0.1 and 0.9, and compute $B^{ML}(x,0.5)$ and $B^{MU}(x,0.5)$ from these estimates.
As shown in Figure 13, the bounds imply that our identification approach can provide informative bounds.
The average difference between $B^{ML}(x,0.5)$ and $B^{MU}(x,0.5)$ is 0.045, which is small compared with the standard deviation of $Y$.
Figure 13 also shows that the structural function $g(x,0.5)$ is close to flat when $x$ is low.
In view of Theorem 3, this fact contributes to narrowing the bounds on $g(x,0.5)$.

Figures 14 and 15 show the lower and upper bounds of $g(8,e)$ and $g(9,e)$ over $e \in [0.25,0.75]$.
As shown in these figures, the lower and upper bounds cross.
There are the following two possible reasons: (i) the assumptions do not hold for higher $e$ and (ii) the estimated functions have sampling errors.
For the first possible reason, the monotonicity assumption may not hold at higher quantiles.
If Assumption 6 is not satisfied for some $e$, then $B^{ML}(x,e)$ may be larger than $B^{MU}(x,e)$.
This result implies that Assumption 6 is testable.
The second possible reason is that we treat the estimates of the conditional distributions and quantiles as true functions.
If the true lower and upper bounds are close, that is, the structural function is nearly point identified, then computed lower and upper bounds may cross.

Using the bounds, $B^{ML}(x,e)$ and $B^{MU}(x,e)$, we compute the bounds of $\Delta g(x,e) \equiv \{g(x+0.25,e)-g(x-0.25,e)\}/0.5$ for $x=8, \, 9$ and $e=0.25, \, 0.5$.
As shown in Figures 14 and 15, the bounds of $g(x,0.75)$ are unreliable.
Hence, we do not compute the bounds of $\Delta g(x,e)$ for $e=0.75$.
We obtain the following lower and upper bounds:
\begin{eqnarray}
0.069 \ \leq & \Delta g(8,0.25) & \leq \ 0.208, \nonumber \\
0.110 \ \leq & \Delta g(8,0.5) & \leq \ 0.252, \nonumber \\
1.181 \ \leq & \Delta g(9,0.25) & \leq \ 1.308, \nonumber \\
2.020 \ \leq & \Delta g(9,0.5) & \leq \ 2.266. \nonumber 
\end{eqnarray}
For low-income ($x=8$) households, the effects of income on a child's cognitive development are small at both the middle and the lower quantiles.
On the contrary, for high-income ($x=9$) households, the effects are large at both quantiles and the impact at the middle quantile is approximately twice as large as that at the lower quantile.
Hence, for high-income households, $\Delta g(x,e)$ is larger than the OLS (or IV) estimate at both the middle and the lower quantiles.
These results imply that the effect of income on a child's cognitive development is quite small for low-ability children from low-income households.

If we consider the model $Y = \beta_0 + \beta_1 X + \beta_2 X^2 + U$, we may capture the nonlinearity with respect to $X$.
However because $Z$ is binary, we cannot estimate this model using the conventional IV estimator.
In addition, classical additive models cannot capture the unobserved heterogeneity in the effect of $X$ on $Y$.
On the contrary, our approach can capture the nonlinearity with respect to $X$ and unobserved heterogeneity.

\section{Conclusions}

In this study, we explored the partial identification of nonseparable models with continuous endogenous and binary instrumental variables.
We showed that the structural function is partially identified when it is monotone or concave in the explanatory variable.
\cite{d2015identification} and \cite{torgovitsky2015identification} prove the point identification of the structural function under a key assumption that the conditional distribution functions of the endogenous variable for different values of the instrumental variables have intersections.
We demonstrated that, even if this assumption does not hold, monotonicity and concavity provide identifying power.
Point identification was achieved when the structural function is flat or linear with respect to the explanatory variable over a given interval.
We computed the bounds using real data and showed that our bounds are informative.

\newpage
\renewcommand{\theequation}{A.\arabic{equation}}
\setcounter{equation}{0}
\section*{Appendix 1: Proofs}

\begin{proof}[Proof of Proposition 1]
\underline{Step.1}\,
We show that, for all $e \in (0,1)$ and $x \in \mathcal{X}_0$,
\begin{eqnarray}
P(\epsilon \leq e | X=x, Z=0) = P(\epsilon \leq e | X=\pi(x), Z=1). \label{dist_e}
\end{eqnarray}

First, we examine variable $V \equiv F_{X|Z}(X|Z)$.
This is called ``control variable'' in \cite{imbens2009identification}.
Let $h^{-1}(z,x)$ be the inverse function of $h(z,v)$ with respect to $v$.
We thus have, for all $(z,x) \in \{0,1\} \times \mathcal{X}_z$,
\begin{eqnarray}
 F_{X|Z}(x|z) &=& P\left( h(z,\eta) \leq x | Z=z\right) \nonumber \\
 &=& P\left( \eta \leq h^{-1}(z,x) | Z=z\right) \nonumber \\
 &=& P\left( \eta \leq h^{-1}(z,x) \right) = h^{-1}(z,x), \nonumber
\end{eqnarray}
where the second equality follows from the strict monotonicity of $h(x,v)$ in $v$ and the third equality follows from $Z \indepe (\epsilon,\eta)$.
Therefore, we obtain
$$
V = h^{-1}(Z,X)=\eta.
$$

Next, we show that the conditional distribution of $\epsilon$ conditional on $(X,Z)=(x,z)$ is the same as that of $\epsilon$ conditional on $V=F_{X|Z}(x|z)$.
Because $(x,z) \rightarrow (F_{X|Z}(x|z),z)$ is one-to-one and $F_{X|Z}(x|z)$ is continuous in $x$, the $\sigma$-field generated by $X$ and $Z$ is the same as that generated by $V$ and $Z$.
Hence, we have
$$
P\left( \epsilon \leq e | X=x, Z=z \right) = P\left( \epsilon \leq e | V=F_{X|Z}(x|z), Z=z \right).
$$
It follows from $Z \indepe (\epsilon, \eta)$ and $V=\eta$ that
\begin{eqnarray}
 P\left( \epsilon \leq e | X=x, Z=z \right) &=&  P\left( \epsilon \leq e | V=F_{X|Z}(x|z) \right). \label{V_property} 
\end{eqnarray}
Hence, the conditional distribution of $\epsilon$ conditional on $X$ and $Z$ solely depends on $V=F_{X|Z}(X|Z)$.

By definition, functions $\pi(x)$ and $\pi^{-1}(x)$ satisfy
\begin{equation}\label{pi_property2} 
\begin{aligned}
F_{X|Z}(\pi(x)|1) &=& F_{X|Z}(x|0), \\
F_{X|Z}(\pi^{-1}(x)|0) &=& F_{X|Z}(x|1).
\end{aligned}
\end{equation}
Hence, events $\{X=x,Z=0\}$ and $\{X=\pi(x),Z=1\}$ have the same $V=F_{X|Z}(X|Z)$, and (\ref{dist_e}) follows from (\ref{V_property}). \\
\underline{Step.2}\,
We show that (\ref{dist_e}) implies $g\left( \pi(x),e \right) = \tilde{T}_{x,1} \left( g(x,e) \right)$.
For all $(x,e) \in \mathcal{X}_0 \times (0,1)$, we have
\begin{eqnarray}
\tilde{T}_{x,1} \left( g(x,e) \right) &=& Q_{Y|X=\pi(x),Z=1} \left( F_{Y|X=x,Z=0} \left( g(x,e) \right) \right) \nonumber \\
 &=& Q_{Y|X=\pi(x),Z=1}\left( P(\epsilon \leq e | X=x, Z=0) \right) \nonumber \\
 &=& Q_{Y|X=\pi(x),Z=1}\left( P(\epsilon \leq e | X=\pi(x), Z=1) \right) \nonumber \\
 &=& Q_{Y|X=\pi(x),Z=1}\left( F_{Y|X=\pi(x),Z=1}\left( g(\pi(x),e) \right) \right) \nonumber \\
 &=& g(\pi(x),e), \nonumber
\end{eqnarray}
where the third equality follows from (\ref{dist_e}).
Similarly, we can prove $g\left( \pi^{-1}(x),e \right) = \tilde{T}_{x,-1} \left( g(x,e) \right)$.
\end{proof}
\vspace{0.1in}

\begin{proof}[Proof of Lemma 1]
Observe that, if $\pi^n(x)$ exists and $\pi^n(x) \in \mathcal{X}_0$, then $\pi^{n+1}(x)$ also exists from (\ref{pi}).
Suppose that there does not exist $n \in \mathbb{N}\cup \{0\}$ such that $\pi^n(x)\in \mathcal{X}_1 \cap \mathcal{X}_0^c = (\underline{x}_1,\underline{x}_0]$.
Then, there exists sequence $\{ x_n \}_{n=0}^{\infty}$ such that $x_n = \pi^n(x)$.
By (\ref{pi_property1}), $\{ x_n \}_{n=0}^{\infty}$ is a decreasing sequence.
Because $x_n > \underline{x}_0$, $\{x_n\}_{n=0}^{\infty}$ converges to $x_{\infty} \in [\underline{x}_0, \overline{x}^0)$.
It follows from (\ref{pi_property2}) that
$$
 F_{X|Z}(x_{n+1}|1) = F_{X|Z}(x_n|0),
$$
meaning we have $F_{X|Z}(x_{\infty}|1) = F_{X|Z}(x_{\infty}|0)$ by the continuity of $F_{X|Z}$. 
However, this equation violates Assumption 3.
Hence, for all $x \in \mathcal{X}$, there exists $n \in \mathbb{N}\cup\{0\}$ such that $\pi^n(x) \in \mathcal{X}_1 \cap \mathcal{X}_0^c$.
Consequently, $\pi^{n'}(x)$ does not exist for $n'>n$.
Similarly, for all $x \in \mathcal{X}$, we have $\pi^{-m}(x) \in \mathcal{X}_0 \cap \mathcal{X}_1^c$ for some $m \in \mathbb{N}\cup\{0\}$.
Then, $\pi^{-m'}(x)$ does not exist for $m'>m$.
Therefore, $\Pi_{x',x}^M$ is finite for all $(x,x') \in \mathcal{X} \times \mathcal{X}$ because the set $\{(n,m)\in \mathbb{Z} \times \mathbb{Z}: \pi^n(x')\ \text{and} \ \pi^m(x) \ \text{exist.} \}$ is finite.

We proceed to show the nonemptiness of $\Pi_{x',x}^M$.
For all $x,x' \in \mathcal{X}$, $(n,m)\in \mathbb{Z}\times\mathbb{Z}$ exists such that $\pi^{n}(x') \in \mathcal{X}_1 \cap \mathcal{X}_0^c = (\underline{x}_1,\underline{x}_0]$ and $\pi^{m}(x) \in \mathcal{X}_0 \cap \mathcal{X}_1^c = [\overline{x}_1,\overline{x}_0)$.
It follows from Assumption 3 (ii) that $\pi^{n}(x') < \pi^{m}(x)$.
\end{proof}
\vspace{0.1in}

\begin{proof}[Proof of Theorem 1]
As discussed in Section 3, it suffices to show that $T_{x',x}^{ML}(y)$ and $T_{x',x}^{MU}(y)$ are strictly increasing in $y$ and surjective.
If $\pi^n(x)$ exists, $\tilde{T}_{x,n}(y)$ is strictly increasing in $y$.
Hence, $T_{x',x}^{ML}(y)$ and $T_{x',x}^{MU}(y)$ are strictly increasing in $y$ because $\Pi_{x',x}^M$ is finite by Lemma 1.
If $\pi^n(x)$ exists, we obtain $\tilde{T}_{x,n}([\underline{y},\overline{y}]) = [\underline{y},\overline{y}]$.
Hence, because $\Pi_{x',x}^M$ is finite, we have $T_{x',x}^{ML}(y)$ and $T_{x',x}^{MU}(y)$ are  surjective.
\end{proof}
\vspace{0.1in}

\begin{proof}[Proof of Lemma 2]
From the proof of Lemma 1, $\Pi^C_{x',x}$ is finite.
Hence, we prove the nonemptiness of $\Pi^C_{x',x}$.
From the proof of Lemma 1, for all $x,x' \in \mathcal{X}$, there exist $n, m \in \mathbb{Z}$ such that $\pi^{m}(x), \pi^{n}(x') \in \mathcal{X}_1 \cap \mathcal{X}_0^c$.
Without loss of generality, we assume $\pi^{n}(x') \leq \pi^{m}(x)$.
Then, $\pi^{m-1}(x)$ and $\pi^{n-1}(x')$ exist because $\pi^{m}(x), \pi^{n}(x') \in \mathcal{X}_1$.
Because $\pi^m(x) \in \mathcal{X}_1 \cap \mathcal{X}_0^c$ and $\pi^{n-1}(x') \in \mathcal{X}_0$, we have $\pi^{n}(x') \leq\pi^{m}(x) \leq \pi^{n-1}(x') \leq \pi^{m-1}(x)$ from (\ref{pi_property3}), and hence $(n,m) \in \Pi_{x',x}^C$.
Therefore, $\Pi_{x',x}^C$ is nonempty.
\end{proof}
\vspace{0.1in}

\begin{proof}[Proof of Theorem 2]
Similar to the proof of Theorem 1, we can obtain 
$$
\left( G_x^{CL} \right)^{-1}(e) \leq g(x,e) \leq \left( G_x^{CU} \right)^{-1}(e).
$$
Because $g(x,e)$ is concave in $x$, if $x = ty' + (1-t)y$ and $t \in (0,1)$, then we have $g(x,e) \geq t g(y',e) + (1-t) g(y,e) \geq t \left( G_{y'}^{CL} \right)^{-1}(e) + (1-t) \left( G_{y}^{CL} \right)^{-1}(e)$.
Hence, we have
$$
g(x,e) \geq \sup_{y,y':y < x < y'} \left\{ \left( \frac{x-y}{y'-y} \right) \left( G_{y'}^{CL} \right)^{-1}(e) + \left( \frac{y'-x}{y'-y} \right) \left( G_y^{CL} \right)^{-1}(e) \right\}.
$$
Because $g(x,e)$ is concave in $x$, if $x = ty' + (1-t)y$ and $t < 0$, then we have $g(x,e) \leq t g(y',e) + (1-t) g(y,e)$.
Because $B^{CL}(x,e) \leq g(x,e) \leq \left( G_x^{CU} \right)^{-1}(e)$, $t<0$, and $1-t>0$, we have $g(x,e) \leq t  B^{CL}(y',e) + (1-t) \left( G_{y}^{CU} \right)^{-1}(e)$.
Similarly, if $x = ty + (1-t)y'$ and $t > 1$, then we have $g(x,e) \leq t g(y,e) + (1-t) g(y',e) \leq t\left( G_{y}^{CU} \right)^{-1}(e) + (1-t) B^{CL}(y',e)$.
Hence, we have
\begin{eqnarray}
g(x,e) \leq & \min \left[  \inf_{y,y':x < y < y'} \left\{ \left( \frac{x-y}{y'-y} \right)B^{CL}(y',e) + \left( \frac{y'-x}{y'-y} \right) \left( G_y^{CU} \right)^{-1}(e) \right\}, \right. \nonumber \\
 & \left. \inf_{y,y':y' < y < x} \left\{ \left( \frac{y-x}{y-y'} \right)B^{CL}(y',e) + \left( \frac{x-y'}{y-y'} \right) \left( G_y^{CU} \right)^{-1}(e) \right\} \right]. \nonumber
\end{eqnarray}
\end{proof}
\vspace{0.1in}

\begin{proof}[Proof of Theorem 3]
\underline{Step.1}\,
First, we show that, for all $x \in \mathcal{X}$, there exists $n^* \in \mathbb{Z}$ such that $\pi^{n^*}(x)$ and $\pi^{n^*+1}(x)$ are well defined and $\pi^{n^*}(x), \pi^{n^*+1}(x) \in [\pi(\tilde{x}),\pi^{-1}(\tilde{x})]$.
If $\pi^n(x)$ and $\pi^n(y)$ are well defined, because $\pi^n(\cdot)$ is strictly increasing, we can obtain
\begin{equation}
x \leq y \ \ \Rightarrow \ \ \pi^n(x) \leq \pi^n(y). \label{pi_property3}
\end{equation}
We consider the following four cases: (i) $\pi(\tilde{x})\leq x \leq \tilde{x}$, (ii) $\tilde{x} \leq x \leq \pi^{-1}(\tilde{x})$, (iii) $x < \pi(\tilde{x})$, and (iv) $x > \pi^{-1}(\tilde{x})$.
In case (i), it follows from (\ref{pi_property3}) that $\pi(\tilde{x})\leq x \leq \tilde{x} \leq \pi^{-1}(x) \leq \pi^{-1}(\tilde{x})$.
In case (ii), it follows from (\ref{pi_property3}) that $\pi(\tilde{x})\leq \pi(x) \leq \tilde{x} \leq x \leq \pi^{-1}(\tilde{x})$.
In case (iii), it follows from the proof of Lemma 1 that $n \in \mathbb{N}$ exists such that $\pi^{-n}(x) \in \mathcal{X}_0 \cap \mathcal{X}_1^c$.
This implies that $\pi^{-1}(x), ... , \pi^{-n}(x)$ exist.
By the definition of $\pi$, we have $\pi(\tilde{x}) \in \mathcal{X}_1$, and hence $x < \pi(\tilde{x}) < \pi^{-n}(x)$. 
Therefore, there exists $n^* \in \mathbb{Z}$ such that $\pi^{n^*+2}(x) \leq \pi(\tilde{x}) \leq \pi^{n^*+1}(x)$ and we can obtain $\pi(\tilde{x}) \leq \pi^{n^*+1}(x) \leq \tilde{x} \leq \pi^{n^*}(x) \leq \pi^{-1}(\tilde{x})$ from (\ref{pi_property3}).
Similarly, in case (iv), there exists $n^* \in \mathbb{Z}$ such that $\pi^{n^*}(x) , \pi^{n^*+1}(x) \in [\pi(\tilde{x}),\pi^{-1}(\tilde{x})]$.\\
\underline{Step.2}\,
Next, we show that $g$ is point identified.
From step 1, for all $x, x' \in \mathcal{X}$, there exists $n,m \in \mathbb{Z}$ such that $\pi^n(x'),\pi^{n+1}(x'),\pi^m(x),\pi^{m+1}(x) \in [\pi(\tilde{x}),\pi^{-1}(\tilde{x})]$.
Then, from (\ref{pi_property3}), we have either $\pi^{n+1}(x') \leq \pi^{m+1}(x) \leq \pi^{n}(x') \leq \pi^{m}(x)$ or $\pi^{m+1}(x) \leq \pi^{n+1}(x') \leq \pi^{m}(x) \leq \pi^{n}(x')$.
If $\pi^{n+1}(x') \leq \pi^{m+1}(x) \leq \pi^{n}(x') \leq \pi^{m}(x)$, then we have $(n+1,m+1),(n,m) \in \Pi_{x',x}^M$.
If $\pi^{m+1}(x) \leq \pi^{n+1}(x') \leq \pi^{m}(x) \leq \pi^{n}(x')$, then we have $(n+1,m) \in \Pi_{x',x}^M$.
Hence, there exists a pair $(n^*,m^*) \in \Pi^M_{x',x}$ such that $\pi^{n^*}(x'), \pi^{m^*}(x) \in [\pi(\tilde{x}),\pi^{-1}(\tilde{x})]$.
As $g(x,e)$ is constant on $[\pi(\tilde{x}),\pi^{-1}(\tilde{x})]$, we obtain
$$
\tilde{T}_{x',n^*} \left( g(x',e) \right) = \tilde{T}_{x,m^*} \left( g(x,e) \right).
$$
Therefore, $g(x',e)= T_{x',x}^{MU} \left( g(x,e) \right)$.
Hence, $\left( G^{ML}_x \right)^{-1}(e)$ coincides with $g(x,e)$ because (\ref{TU_M}) becomes an equality.
This implies that $B^{ML}(x,e)$ coincides with $g(x,e)$.
Similarly, $B^{MU}(x,e)$ coincides with $g(x,e)$.
\end{proof}
\vspace{0.1in}

\begin{proof}[Proof of Theorem 4]
Similar to Theorem 3, for all $x,x' \in \mathcal{X}$, there exist $(n,m) \in \Pi^C_{x',x}$ such that $\pi^n(x')$, $\pi^{n-1}(x')$, and $\pi^m(x)$ are in $[\pi(\tilde{x}),\pi^{-1}(\tilde{x})]$.
Because $g(x,e)$ is linear in $x$, we have
$$
t_{x',x}^{n,m} \cdot \tilde{T}_{x',n} \left( g(x',e) \right) + (1-t_{x',x}^{n,m}) \cdot \tilde{T}_{x',n-1} \left( g(x',e) \right)= \tilde{T}_{x,m} \left( g(x,e) \right).
$$
Similarly, for all $x,x' \in \mathcal{X}$, there exist $(n,m) \in \Pi^C_{x,x'}$ such that
$$
\tilde{T}_{x',m} \left( g(x',e) \right) = t_{x,x'}^{n,m} \cdot \tilde{T}_{x,n} \left( g(x,e) \right) + (1-t_{x,x'}^{n,m}) \cdot \tilde{T}_{x,n-1} \left( g(x,e) \right).
$$
Hence, as described above, $B^{CL}(x,e)$ and $B^{CU}(x,e)$ coincide with $g(x,e)$ because inequalities (\ref{TU_C}) and (\ref{TL_C}) become equalities.
\end{proof}
\vspace{0.1in}

\if0
\begin{proof}[Proof of Proposition 2]
Because $g(x,e)$ is nondecreasing in $e$, we have
\begin{eqnarray}
F_{Y|X,Z}^{-}\left(g(x,e)|x,0\right) &=& P\left( g(x,\epsilon) < g(x,e) | X=x, Z=0 \right) \nonumber \\
& \leq & P\left( \epsilon < e | X=x, Z=0 \right) \nonumber \\
& = & P\left( \epsilon \leq e | X=\pi(x), Z=1 \right) \nonumber \\
& \leq & P\left( g\left( \pi(x),\epsilon \right) \leq g\left( \pi(x),e \right) | X=\pi(x), Z=1 \right) \nonumber \\
& = & F_{Y|X,Z}^{+}\left(g(\pi(x),e)|\pi(x),1\right), \label{hat_T_1}
\end{eqnarray}
where the first inequality follows from $\{\epsilon: g(x,\epsilon) < g(x,e)\} \subset \{\epsilon: \epsilon < e\}$ and the second from $\{\epsilon: \epsilon \leq e\} \subset \{\epsilon: g(x,\epsilon) \leq g(x,e)\}$.
From the definition of $Q_{Y|X,Z}^{-}(\tau|x,z)$, it follows that $Q_{Y|X,Z}^{-}\left(F_{Y|X,Z}^{+}(y|x,z) \middle| x,z \right) = \inf \{y': F_{Y|X,Z}^{+}(y'|x,z) \geq F_{Y|X,Z}^{+}(y|x,z) \} \vee \underline{y} \leq y$ for all $y \in \overline{\mathcal{Y}}$.
Hence, inequality (\ref{hat_T_1}) implies that
\begin{eqnarray}
\hat{T}_x^{(1)}\left( g(x,e) \right) &=& Q_{Y|X,Z}^{-}\left(F_{Y|X,Z}^{-}(g(x,e)|x,0) \middle| \pi(x),1 \right) \nonumber \\
& \leq & Q_{Y|X,Z}^{-}\left(F_{Y|X,Z}^{+}\left(g(\pi(x),e)|\pi(x),1\right) \middle| \pi(x),1 \right) \nonumber \\
& \leq & g(\pi(x),e). \nonumber
\end{eqnarray}

Similarly, because $g(x,e)$ is nondecreasing in $e$, we have
\begin{eqnarray}
F_{Y|X,Z}^{+}\left(g(x,e)|x,0\right) \geq F_{Y|X,Z}^{-}\left(g(\pi(x),e)|\pi(x),1\right). \nonumber
\end{eqnarray}
Because $Q_{Y|X,Z}^{+}\left(F_{Y|X,Z}^{-}(y|x,z) \middle| x,z \right) = \sup \{y': F_{Y|X,Z}^{-}(y'|x,z) \leq F_{Y|X,Z}^{-}(y|x,z) \} \wedge \overline{y} \geq y$ for all $y \in \overline{\mathcal{Y}}$, we have
$$
g(\pi(x),e) \leq \check{T}_x^{(1)}\left( g(x,e) \right).
$$

Similarly, we have two inequalities: $g(\pi^{-1}(x),e) \geq \hat{T}_x^{(-1)}\left( g(x,e) \right)$ and $g(\pi^{-1}(x),e) \leq \check{T}_x^{(-1)}\left( g(x,e) \right)$.
\end{proof}
\vspace{0.1in}

\begin{proof}[Proof of Theorem 5]
First, we show that
\begin{equation}
\inf\{u:G_x^{GL}(u) \geq e\} \vee \underline{y} \leq g(x,e) \leq \sup\{u:G_x^{GU}(u) \leq e\} \wedge \overline{y}. \label{g_bound_G}
\end{equation}
Because $T_{x',x}^{GU}(y)$ satisfies (\ref{TU_G}), we have
\begin{eqnarray}
G_x^{GL}\left( g(x,e) \right) &=& \int F_{Y|X=x'}^{+}\left( T_{x',x}^{GU}(g(x,e)) \right)dF(x') \nonumber \\
&\geq & \int F_{Y|X=x'}^{+}\left( g(x',e) \right)dF(x') \nonumber \\
&=& \int P\left( g(x',\epsilon) \leq g(x',e)|X=x' \right)dF(x') \nonumber \\
&\geq & \int P\left( \epsilon \leq e|X=x' \right)dF(x') = e, \nonumber
\end{eqnarray}
where the second inequality follows from $\{\epsilon:\epsilon \leq e\} \subset \{\epsilon: g(x',\epsilon) \leq g(x',e)\}$.
Because $g(x,e) \geq \underline{y}$, we can obtain $g(x,e) \geq \inf\{u:G_x^{GL}(u) \geq e\} \vee \underline{y}$.
Similarly, because $T_{x',x}^{GL}(y)$ satisfies (\ref{TL_G}), we have
\begin{eqnarray}
G_x^{GU}\left( g(x,e) \right) &\leq & \int F_{Y|X=x'}^{-}\left( g(x',e) \right)dF(x') \nonumber \\
&=& \int P\left( g(x',\epsilon) < g(x',e)|X=x' \right)dF(x') \nonumber \\
&\leq & \int P\left( \epsilon < e|X=x' \right)dF(x') = e, \nonumber
\end{eqnarray}
where the second inequality follows from $\{\epsilon: g(x',\epsilon) < g(x',e)\} \subset \{\epsilon:\epsilon < e\}$.
Hence, we can obtain $g(x,e) \leq \sup\{u:G_x^{GU}(u) \leq e\} \wedge \overline{y}$.

Because $g(x,e)$ is nondecreasing in $x$ and (\ref{g_bound_G}) holds, similar to Theorem 1, we have $B^{GL}(x,e) \leq g(x,e) \leq B^{GU}(x,e)$.
\end{proof}
\fi

\afterpage{\clearpage}
\newpage
\section*{Appendix 2: Figures}

\begin{figure}[h]
\centering
\includegraphics[width=15cm]{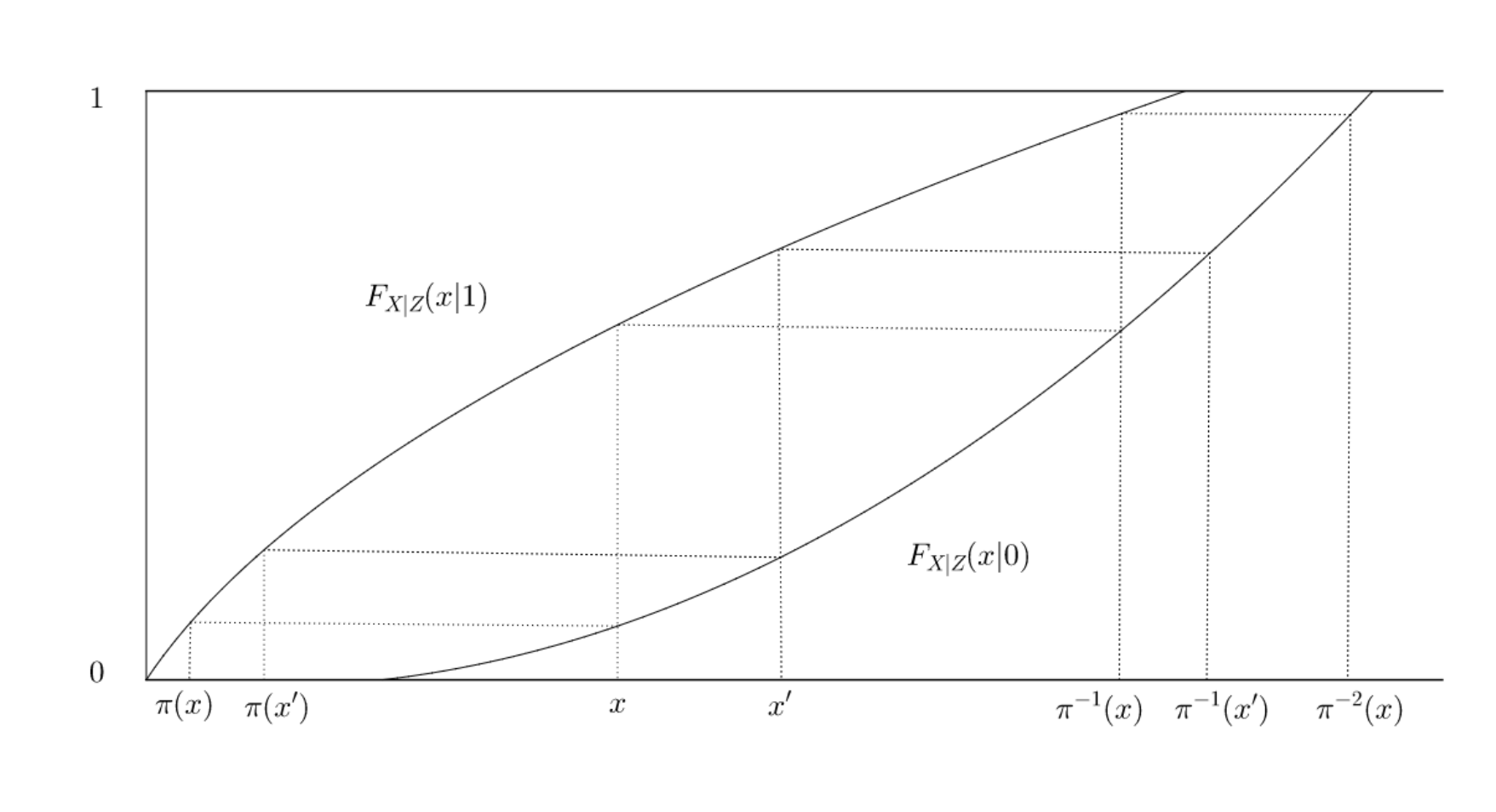}
\caption{The case where Assumption 3 holds.}
\end{figure}

\begin{figure}
  \centering
  \includegraphics[width=15cm]{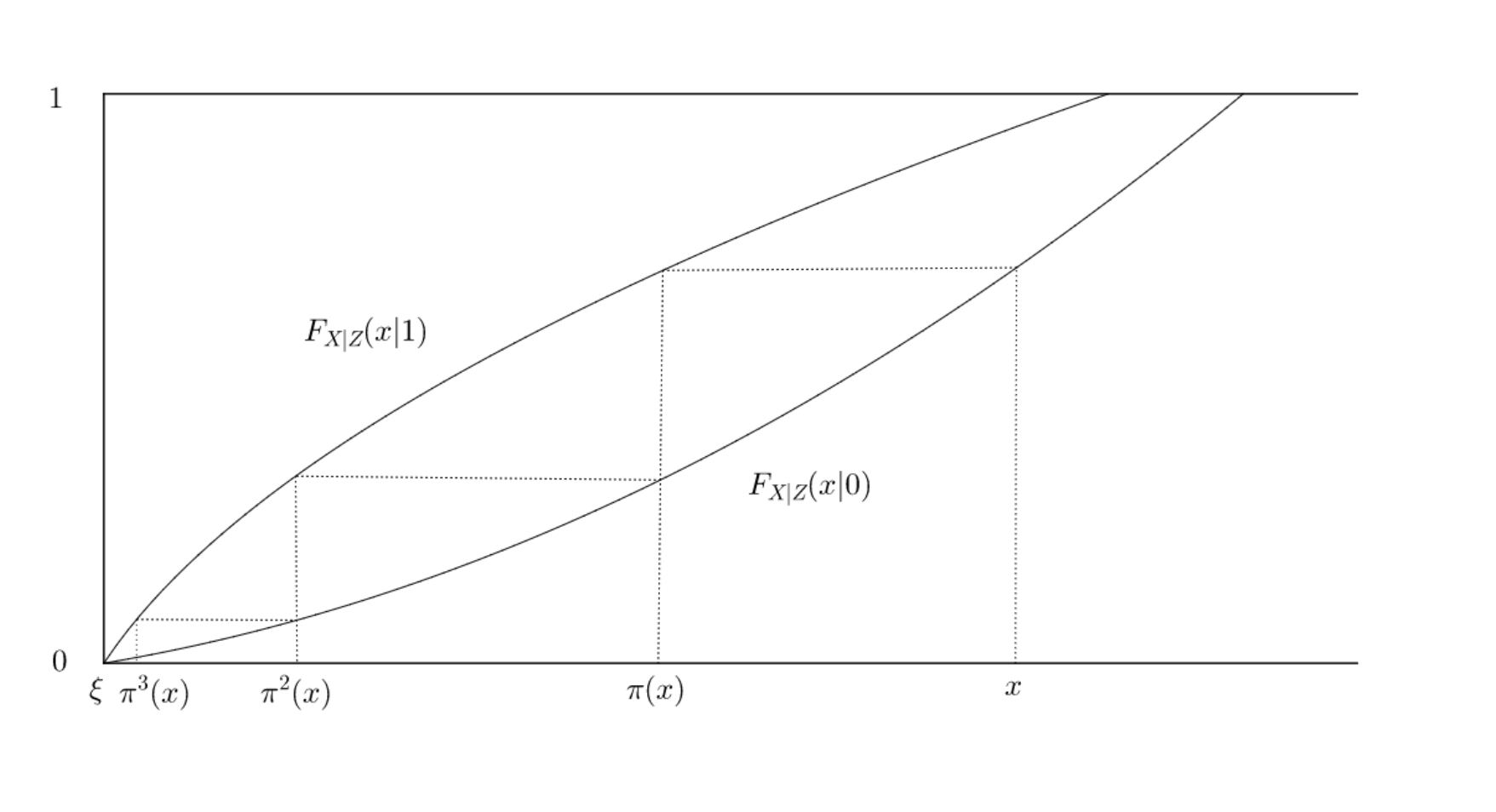}
  \caption{The case where Assumption 3 does not hold.}
\end{figure}

\begin{figure}
  \centering
  \includegraphics[width=16cm]{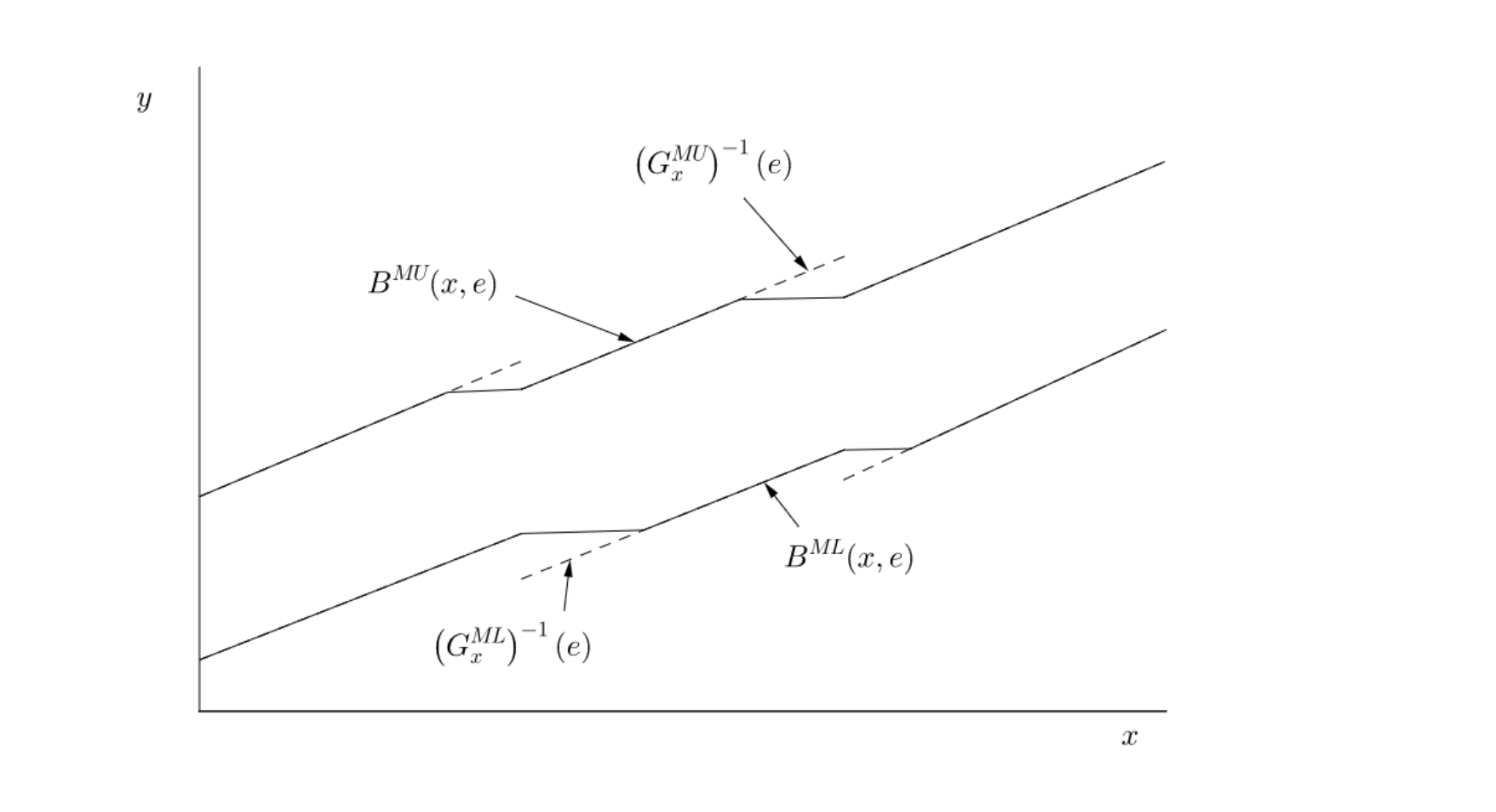}
  \caption{The dashed lines denote $\left( G_{x}^{ML} \right)^{-1}(e)$ and $\left( G_{x}^{MU} \right)^{-1}(e)$. The solid lines denote $B^{ML}(x,e)$ and $B^{MU}(x,e)$.}
\end{figure}

\begin{figure}
  \centering
  \includegraphics[width=16cm]{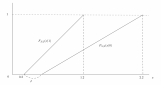}
  \caption{$F_{X|Z}(x|0)$ and $F_{X|Z}(x|1)$ for Simulation 1.}
\end{figure}

\begin{figure}
  \centering
  \includegraphics[width=16cm]{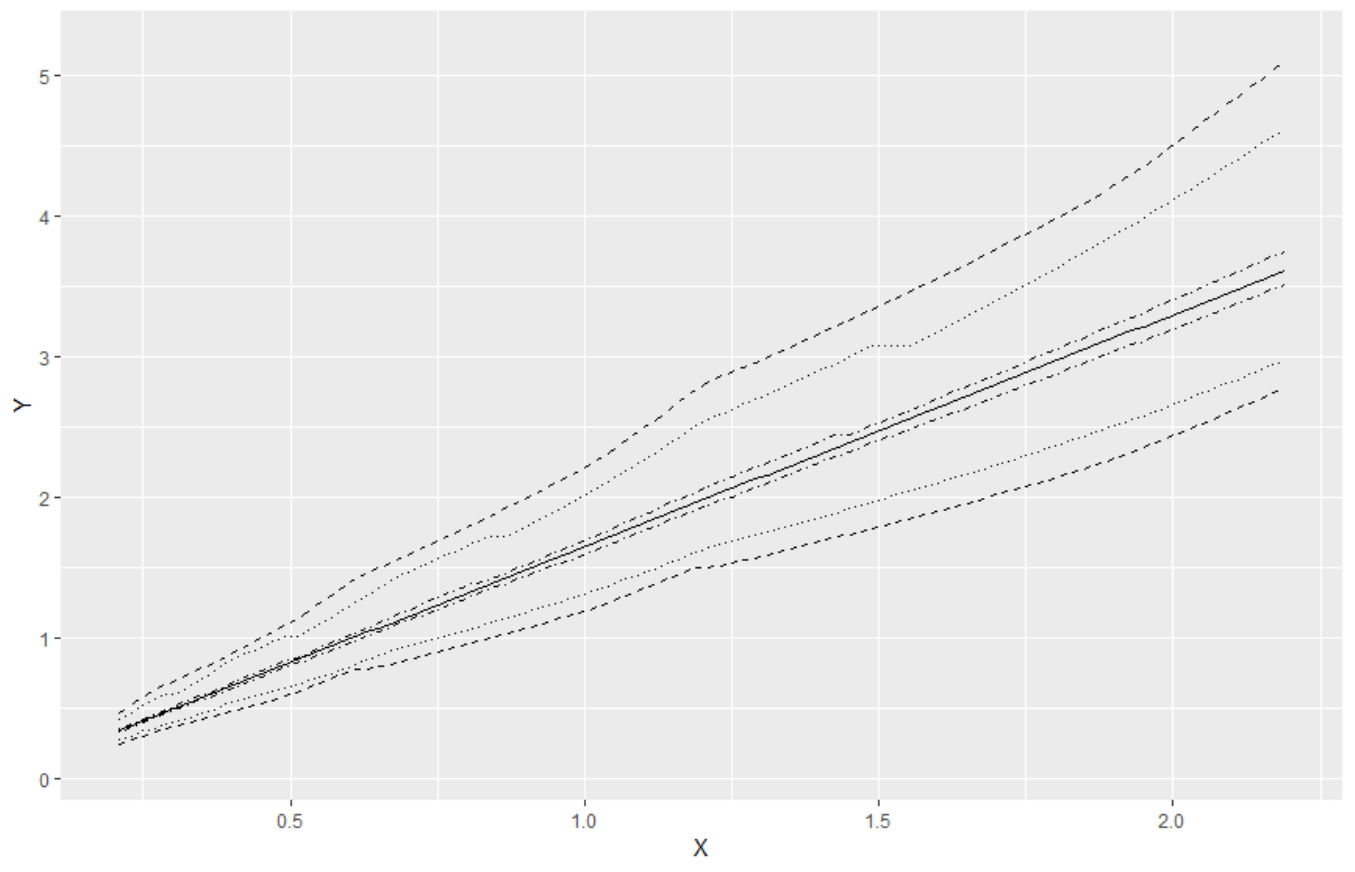}
  \caption{$h(x) = h_1(x)$. The solid line denotes $g(x,0.5)$. The dashed, dotted, and dash-dotted lines denote $B^{ML}(x,0.5)$ and $B^{MU}(x,0.5)$ when $\rho = 0.3$, $0.1$, and $0.01$, respectively.}
\end{figure}

\begin{figure}
  \centering
  \includegraphics[width=16cm]{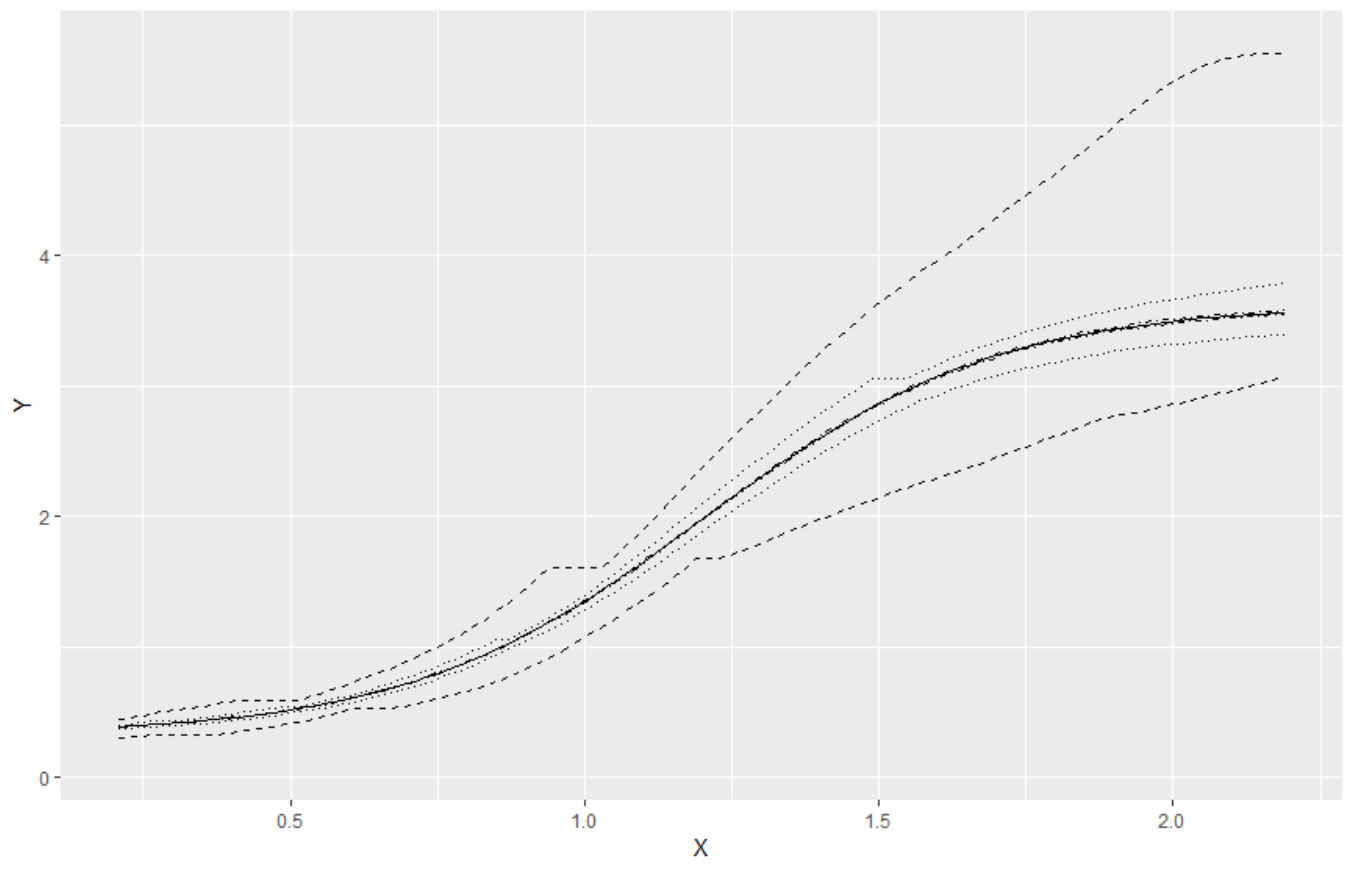}
  \caption{$h(x) = h_2(x)$. The solid line denotes $g(x,0.5)$. The dashed, dotted, and dash-dotted lines denote $B^{ML}(x,0.5)$ and $B^{MU}(x,0.5)$ when $\rho = 0.3$, $0.1$, and $0.01$, respectively.}
\end{figure}

\begin{figure}
  \centering
  \includegraphics[width=16cm]{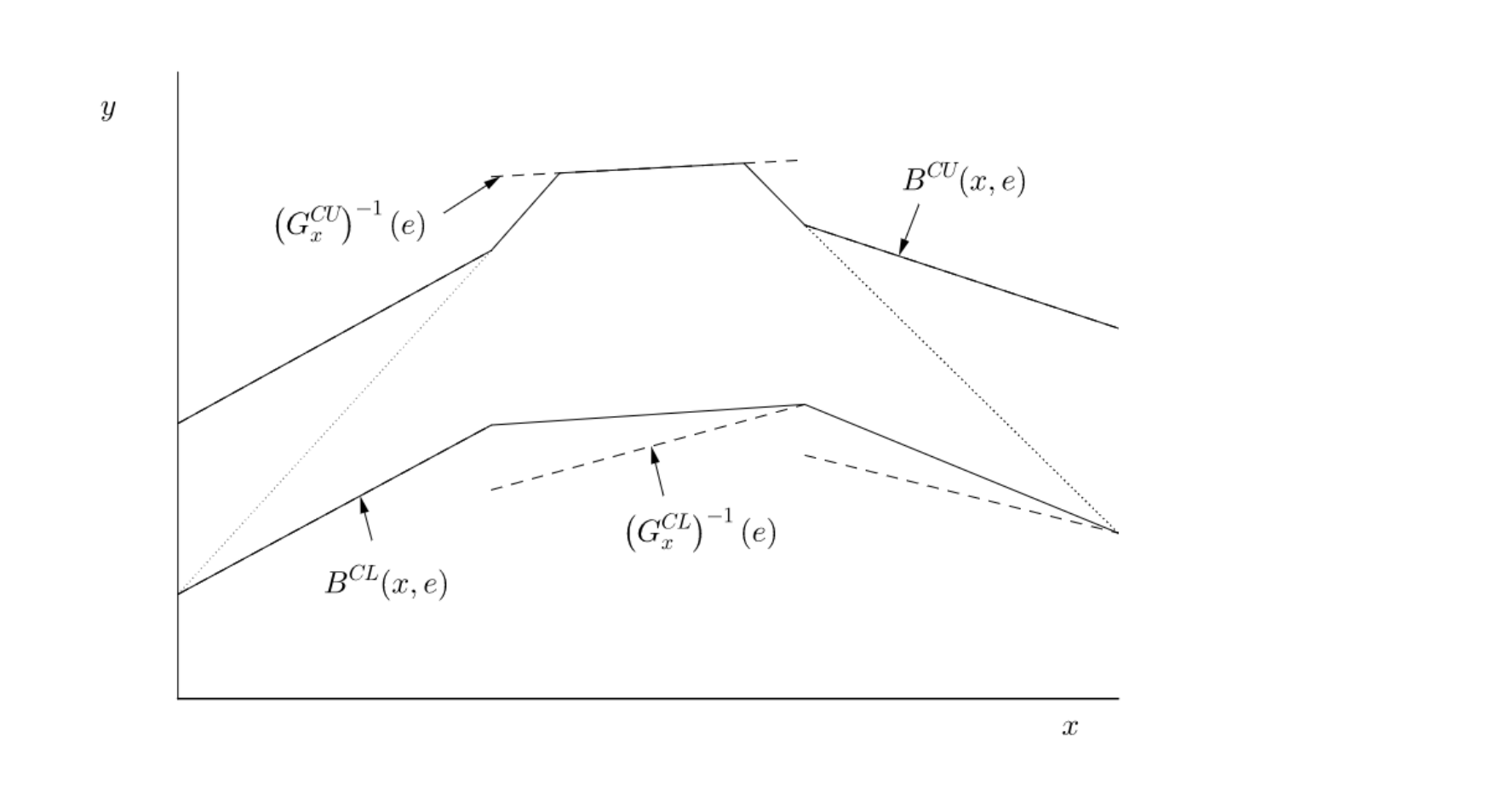}
  \caption{The dashed lines denote $\left( G_{x}^{CL} \right)^{-1}(e)$ and $\left( G_{x}^{CU} \right)^{-1}(e)$. The solid lines denote $B^{CL}(x,e)$ and $B^{CU}(x,e)$.}
\end{figure}

\begin{figure}
  \centering
  \includegraphics[width=15cm]{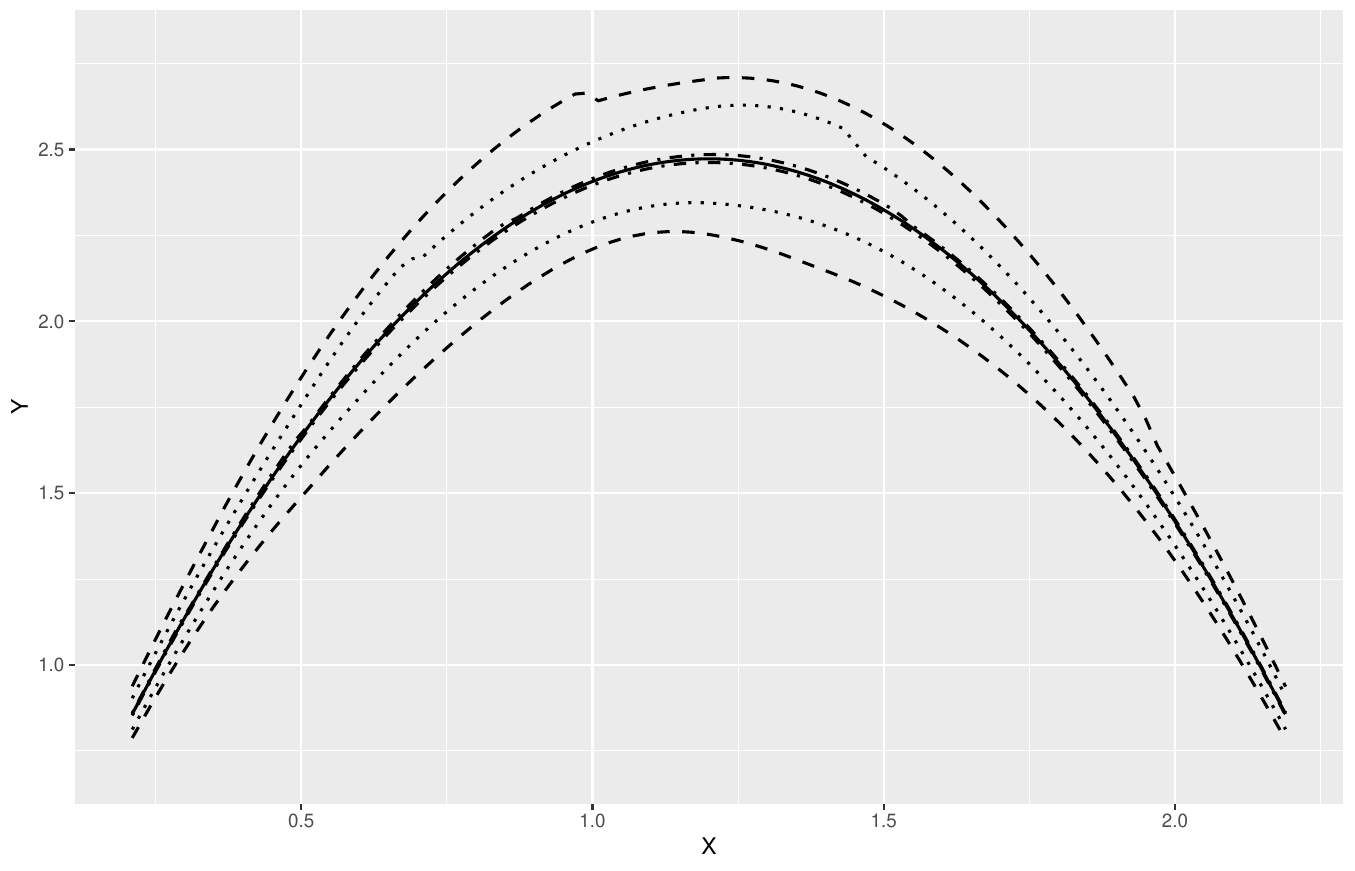}
  \caption{The solid line denotes $g(x,0.5)$. The dashed, dotted, and dash-dotted lines denote $B^{CL}(x,0.5)$ and $B^{CU}(x,0.5)$ when $\rho = 0.8$, $0.5$, and $0.1$, respectively.}
\end{figure}

\begin{figure}
  \centering
  \includegraphics[width=16cm]{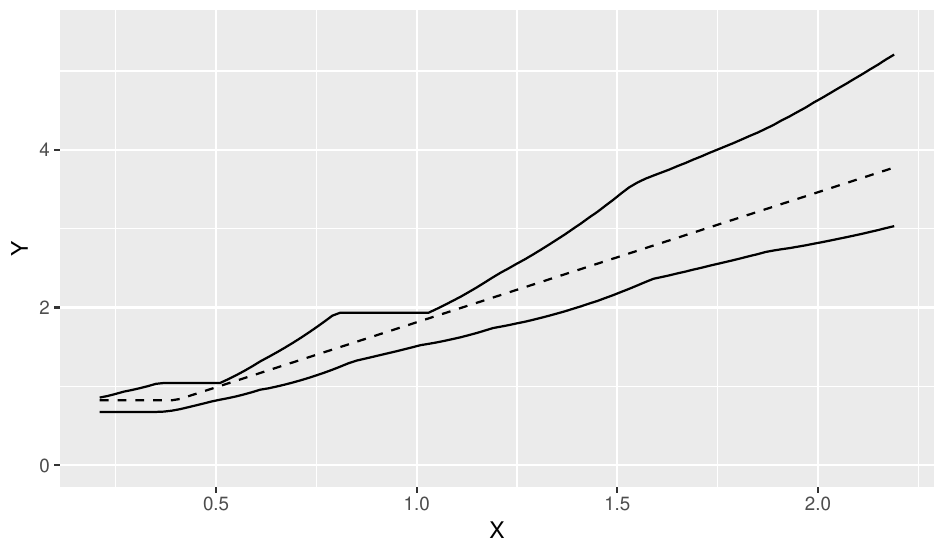}
  \caption{$\delta = 0.4$. The dashed line denotes $g(x,0.5)$. The solid lines denote $B^{ML}(x,0.5)$ and $B^{MU}(x,0.5)$.}
\end{figure}

\begin{figure}
  \centering
  \includegraphics[width=16cm]{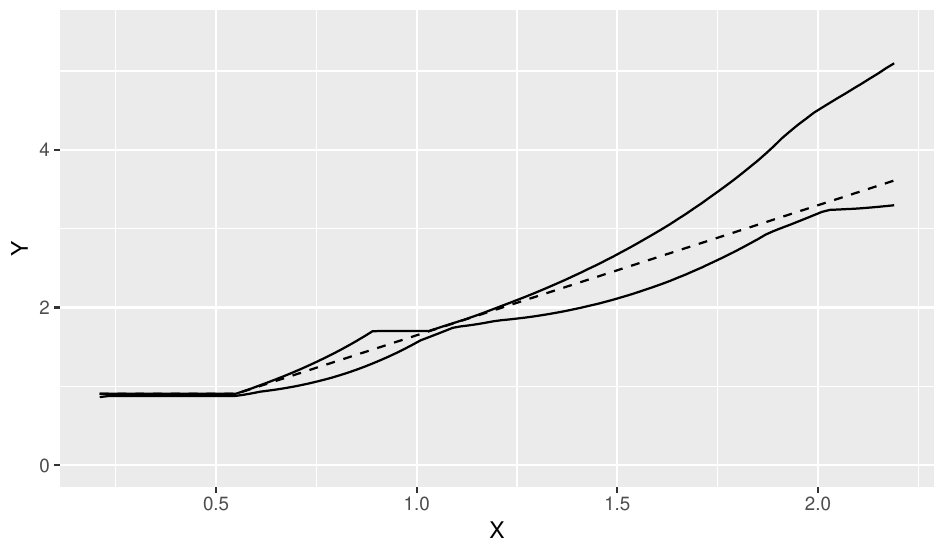}
  \caption{$\delta = 0.55$. The dashed line denotes $g(x,0.5)$. The solid lines denote $B^{ML}(x,0.5)$ and $B^{MU}(x,0.5)$.}
\end{figure}

\begin{figure}
  \centering
  \includegraphics[width=16cm]{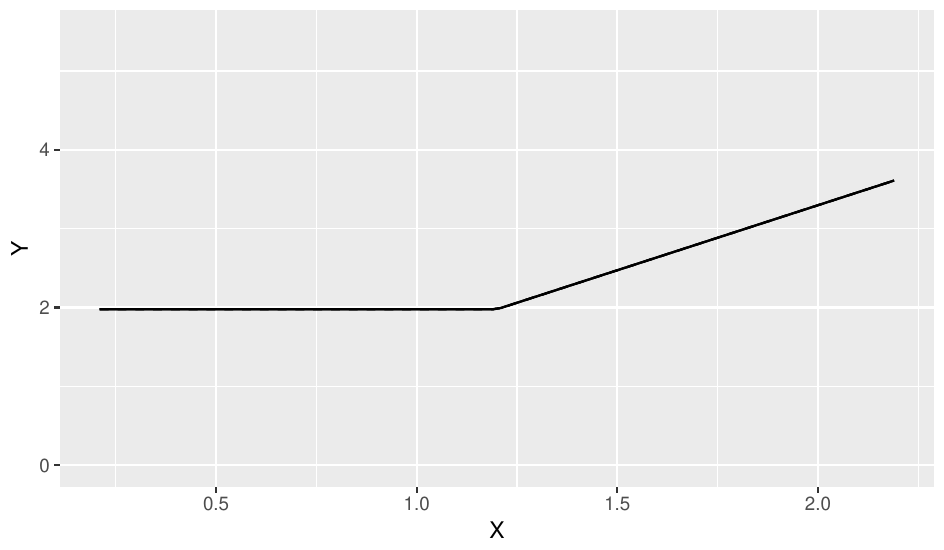}
  \caption{$\delta = 1.2$. The dashed line denotes $g(x,0.5)$. The solid lines denote $B^{ML}(x,0.5)$ and $B^{MU}(x,0.5)$.}
\end{figure}

\begin{figure}
  \centering
  \includegraphics[width=16cm]{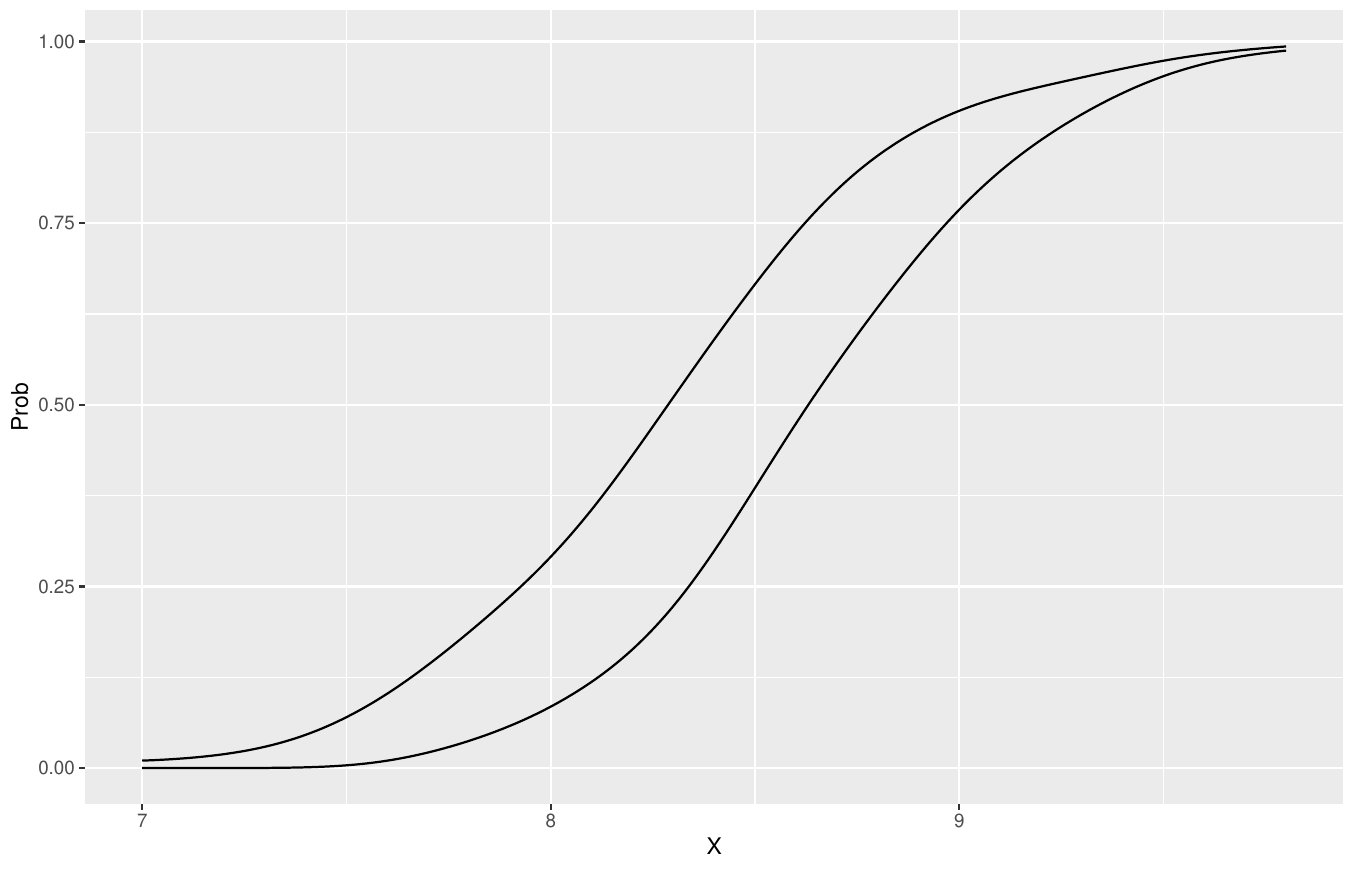}
  \caption{The right-hand line denotes $F_{X|Z}(x|0)$ and the left-hand one denotes $F_{X|Z}(x|1)$.}
\end{figure}

\begin{figure}
  \centering
  \includegraphics[width=16cm]{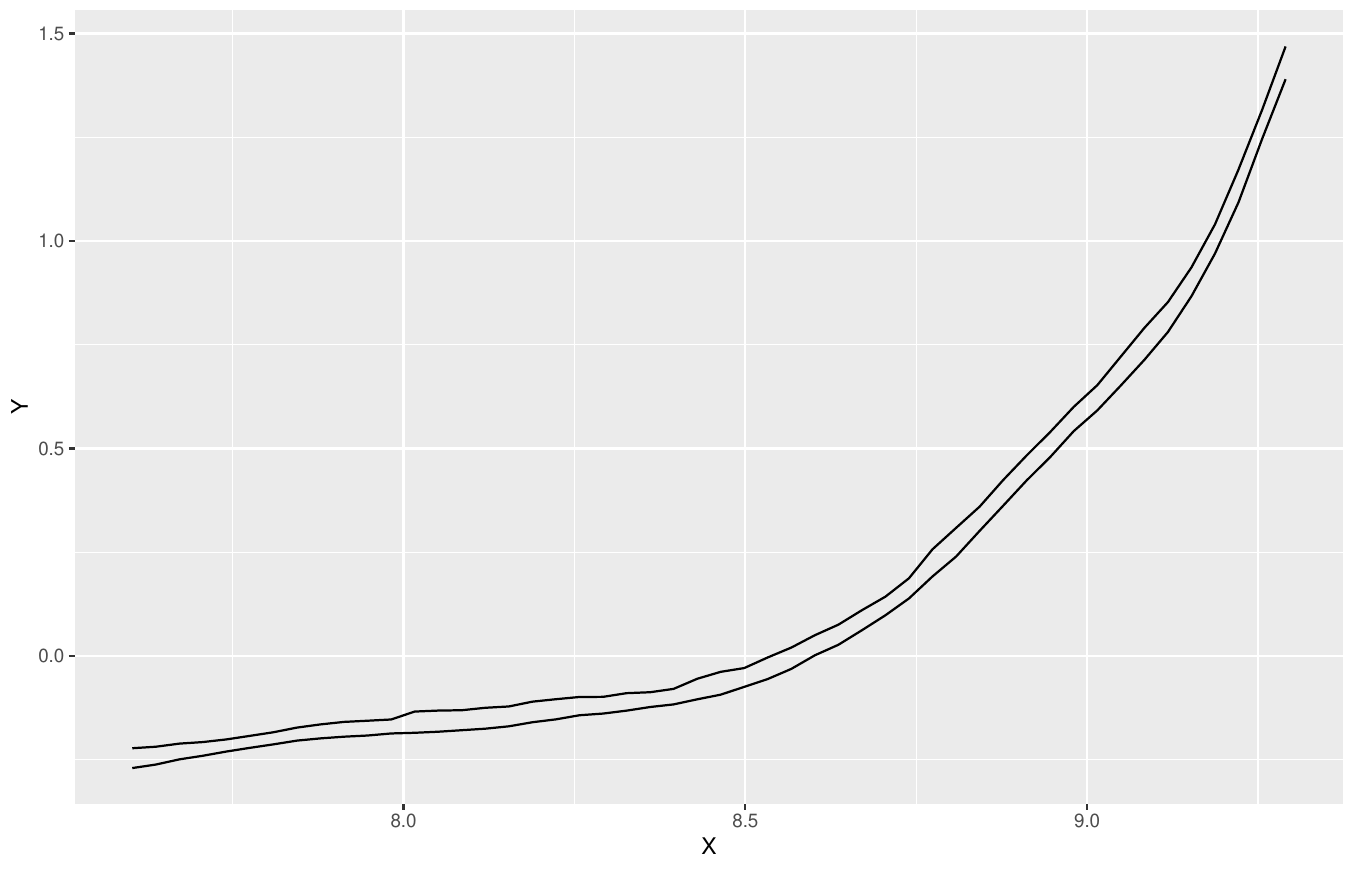}
  \caption{The lower line denotes $B^{ML}(x,0.5)$ and the upper one denotes $B^{MU}(x,0.5)$.}
\end{figure}

\begin{figure}
  \centering
  \includegraphics[width=16cm]{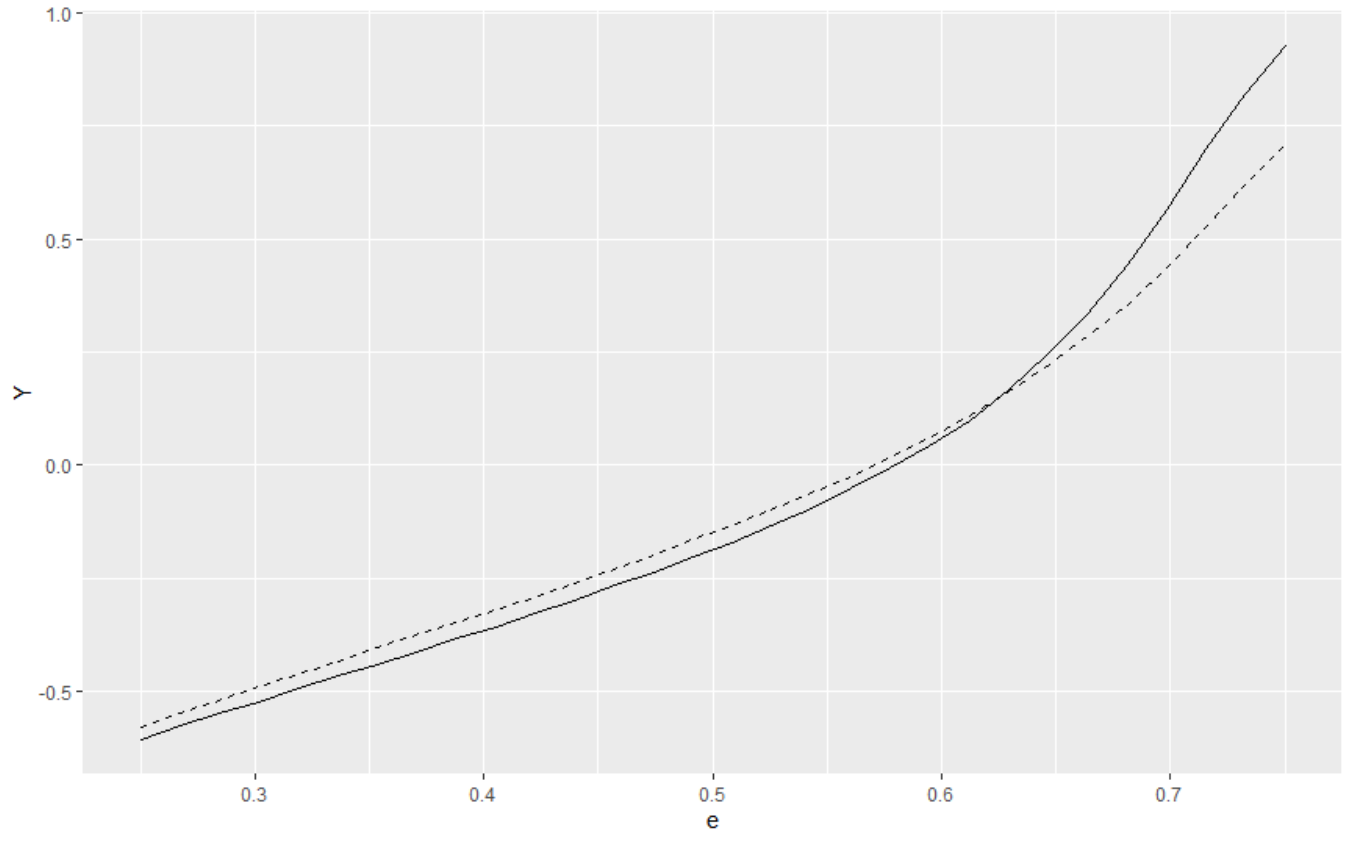}
  \caption{The solid line denotes $B^{ML}(8,e)$ and the dashed line denotes $B^{MU}(8,e)$.}
\end{figure}

\begin{figure}
  \centering
  \includegraphics[width=16cm]{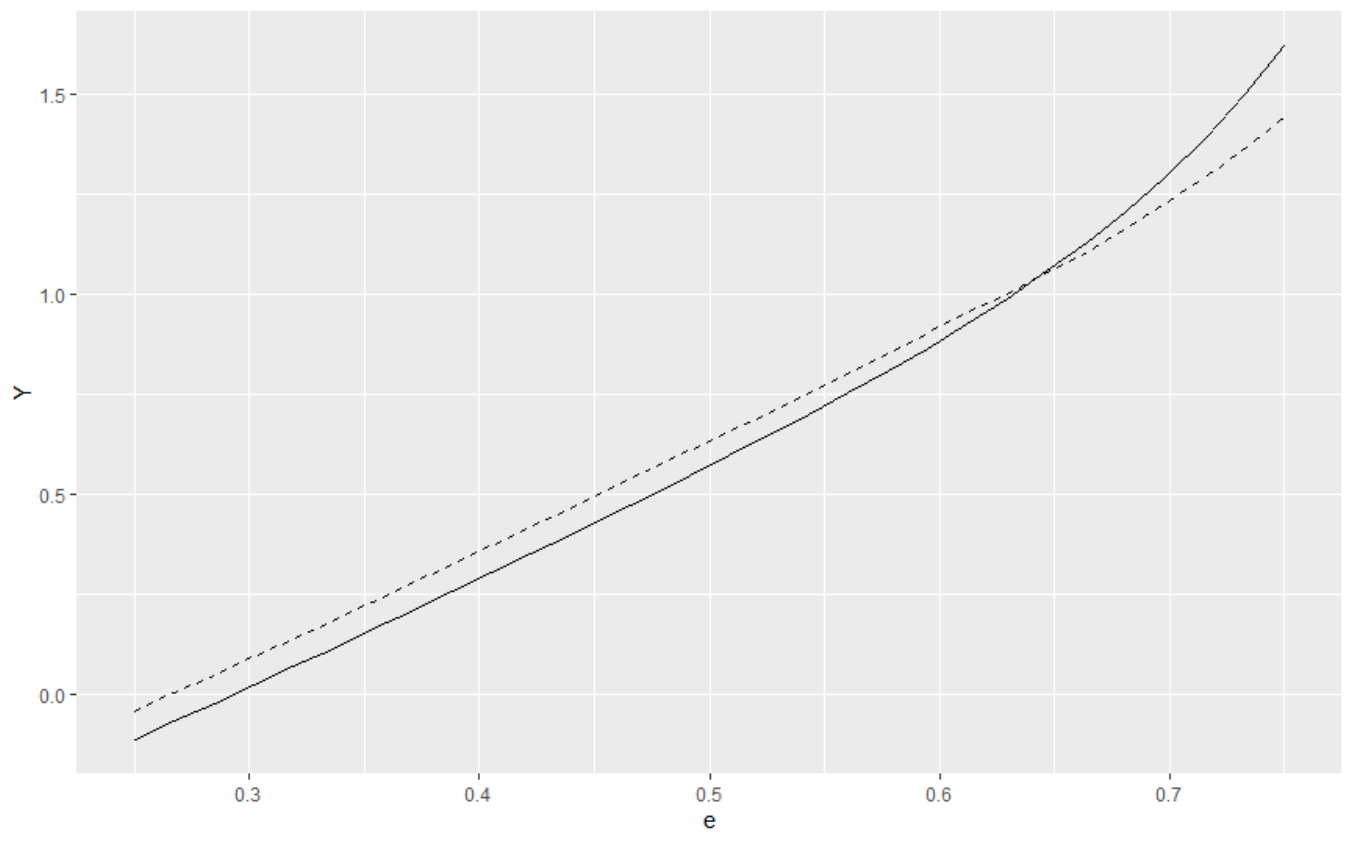}
  \caption{The solid line denotes $B^{ML}(9,e)$ and the dashed line denotes $B^{MU}(9,e)$.}
\end{figure}

\afterpage{\clearpage}
\newpage
\section*{Appendix 3: Bounds without Assumption 4 (ii)}

Here, we obtain the lower and upper bounds of $g(x,e)$ under Assumptions 1, 2, 3, 4 (i), 5, and 6.
As such, we can show that $\mathcal{Y}_{x,z}$ is an open interval and does not depend on $z$.
By model (\ref{model}), the support of $Y|X=x,Z=z$ is equivalent to that of $g(x,\epsilon)|X=x,Z=z$.
Hence, under Assumption 5 (ii), we have
$$
\mathcal{Y}_{x,z} = \{g(x,e): e \in (0,1)\},
$$
which implies that $\mathcal{Y}_{x,z}$ does not depend on $z$.
By Assumption 2 (i), $\mathcal{Y}_{x,z}$ must be an open interval.
Hence, we have
$$
\mathcal{Y}_{x,z} = \mathcal{Y}_x \equiv (\underline{y}_x, \overline{y}_x),
$$
where $-\infty \leq \underline{y}_x < \overline{y}_x \leq +\infty$.

First, Proposition 1 holds without Assumption 4 (ii).
Hence, for $n \in \mathbb{Z}$, if $\pi^n(x)$ exists, we can construct $\tilde{T}_x^{(n)} :\mathcal{Y}_x \rightarrow \mathcal{Y}_{\pi^n(x)}$ that satisfies
$$
g\left( \pi^n(x),e \right) = \tilde{T}_{x,n}\left( g(x,e) \right).
$$
If $(n,m) \in \Pi_{x',x}^M$, then Assumption 6 implies that
$$
 \tilde{T}_{x',n} \left( g(x',e) \right) \leq  \tilde{T}_{x,m} \left( g(x,e) \right).
$$
Because $\tilde{T}_{x',n}(y)$ is strictly increasing in $y$, there exists the inverse function $\tilde{T}_{x',n}^{-1} : \mathcal{Y}_{\pi^n(x')} \rightarrow \mathcal{Y}_{x'}$.
We define $\tilde{T}_{x',n}^{+} : \mathcal{Y}_{\pi^m(x)} \rightarrow \mathbb{R}$ as
$$
\tilde{T}_{x',n}^{+}(y) = \begin{cases}
    \tilde{T}_{x',n}^{-1}(y), & \text{if $y \in \mathcal{Y}_{\pi^n(x')}$}  \\
    \overline{y}_{x'}, & \text{otherwise}
  \end{cases}.
$$
Then, for all $(n,m) \in \Pi_{x',x}^M$ and $e \in (0,1)$, we obtain
$$
g(x',e) \leq \tilde{T}_{x',n}^{+} \left( \tilde{T}_{x,m} \left( g(x,e) \right) \right).
$$
We define
\begin{eqnarray}
T^{MU*}_{x',x}(y) &\equiv & \min_{(n,m) \in \Pi_{x',x}^M} \tilde{T}_{x',n}^{+} \left( \tilde{T}_{x,m} \left( g(x,e) \right) \right), \nonumber \\
G^{ML*}_x(u) &\equiv & \int F_{Y|X=x'}\left( T^{MU*}_{x',x}(u) \right) dF_X(x'). \nonumber
\end{eqnarray}
Then, $T^{MU*}_{x',x}(y)$ satisfies $g(x',e) \leq T^{MU*}_{x',x}\left(g(x,e)\right)$, but $G^{ML*}_x(u)$ may not be strictly increasing.
Hence, the upper bound of $g(x,e)$ is obtained from
$$
B^{ML*}(x,e) \equiv \sup_{y:y\leq x} \left\{ \inf\{u:G^{ML*}_y(u) \geq e\} \right\} \vee \underline{y}_x.
$$
Similarly, we can obtain the lower bound of $g(x,e)$ without Assumption 4 (ii).

\clearpage

\bibliographystyle{ecta}
\bibliography{nonseparable_model}

\end{document}